\newtheorem{lemma}{Lemma}
\newtheorem{theorem}{Theorem}
\definecolor{darkgreen}{rgb}{0,0.7,0}
\newcommand{\kibitz}[2]{\ifnum\Comments=1\textcolor{#1}{#2}\fi}
\newcommand{\cut}[1]{ }
\title{Audit Games with Multiple Defender Resources} 
\author{Jeremiah Blocki$^1$, Nicolas Christin$^1$, Anupam Datta$^1$, Ariel D. Procaccia$^1$, Arunesh Sinha$^2$ \\ 
$^1$Carnegie Mellon University, USA; \{arielpro@cs., jblocki@cs., danupam@, nicolasc@\}cmu.edu \\
$^2$University of Southern California, USA; aruneshs@usc.edu }  
\begin{document} 
\maketitle
\begin{abstract}
Modern organizations (e.g., hospitals, social networks, government agencies) rely heavily on audit to detect and punish 
insiders who inappropriately access and disclose confidential information. Recent work on \emph{audit games} models 
the strategic interaction between an auditor with a single audit resource and auditees as a Stackelberg game, augmenting 
associated well-studied security games with a configurable punishment parameter. We significantly generalize this audit 
game model to account for multiple audit resources where each resource is restricted to audit a subset of all 
potential violations, thus enabling application to practical auditing scenarios. We provide an FPTAS that computes 
an approximately optimal solution to the resulting non-convex optimization problem. The main technical novelty is in the design and correctness 
proof of an optimization transformation that enables the construction of this FPTAS. In addition, we experimentally 
demonstrate that this transformation significantly speeds up computation of solutions for a class of audit games and 
security games. 
\end{abstract}

\section{Introduction}

Modern organizations (e.g., hospitals, banks, social networks, search
engines) that hold large volumes of personal information rely heavily on
auditing for privacy protection. These audit mechanisms
combine automated methods with human input to detect and punish violators.
Since human audit resources are limited, and often not sufficient
to investigate all potential violations, current state-of-the-art audit tools
provide heuristics to guide human effort~\cite{fairwarning}. However, numerous 
reports of privacy breaches caused by malicious insiders bring to question the
effectiveness of these audit mechanisms~\cite{Ponemon2011benchmark,Ponemon2011}. 

Recent work on \emph{audit games} by Blocki et al.~\shortcite{bcdps2013audit} 
approaches a piece of this problem using game-theoretic techniques. 
Their thesis is that effective human audit resource allocation and punishment
levels can be efficiently computed by modeling the audit process as a
game between an auditor and auditees. At a technical level, their audit 
game model augments a well-studied Stackelberg security games model~\cite{tambe}
with a configurable punishment parameter. The auditor (henceforth called 
the \emph{defender}) can audit \emph{one} of $n$ potential violations (referred to as 
\emph{targets}). The defender's optimal strategy is a randomized auditing 
policy---a distribution over targets such that when the attacker best responds 
the defender's utility is maximized. The novel ingredient of the audit games model 
is a punishment level, chosen by the defender, which specifies how severely the 
adversary will be punished if he is caught. The defender may try to set a high 
punishment level in order to deter the adversary. However, punishment is not
free. The defender incurs a cost for punishing, e.g. punishments such as suspension
or firing of violators require maintaining resources for hiring and training of 
replacements. Blocki et al.~\shortcite{bcdps2013audit} provide an efficient algorithm 
for computing an optimal strategy for the defender to commit to. 

While this work distills the essence of the defender's dilemma in auditing situations, 
it is too restricted to inform real-world audit strategies. In typical audit 
settings, the defender has \emph{multiple resources} using which she can audit a subset of the 
targets (not just one target). Furthermore, each resource may be \emph{restricted} in 
the targets that it can audit. For example, some organizations follow 
hierarchical audit strategies in which a manager is only required to audit potential 
violations committed by her direct reports. Similarly, specialized audits are 
common, for example, employing disjoint audit resources to detect finance-related and privacy-related 
violations.  

\cut{
Auditing is ubiquitously used by organizations for the purpose of enforcing 
various policies, such as the HIPAA privacy policy in hospitals, Internet usage policies
in firms, and financial reimbursement policies in universities. Auditing involves 
\emph{post-hoc} inspection of employees' actions coupled with punishments
when policy violations are detected. In spite of the prevalence of auditing, in practice
audits are often ad-hoc in nature~\cite{fairwarning}, guided by subjective human judgments. 
While it is well accepted that risk management should be
a guiding principle for auditing~\cite{Deloitte}, there is precious little work on modeling
the audit process and computing principled audit strategies.

In a recent paper, Blocki et al.~\shortcite{bcdps2013audit} proposed a novel, simple model 
of auditing that builds on the well-known Stackelberg security games model~\cite{tambe}. In the model of Blocki et al., the \emph{defender} can audit one of $n$ possible cases (referred to as targets). Given the defender's choice, the \emph{adversary} chooses a target to attack, in a way that maximizes its utility. The defender's optimal strategy is a randomized auditing policy, e.g.,  a distribution over targets, such that when the attacker best responds, the defender's utility is maximized. The novel ingredient of the audit games model is a punishment level $x$, chosen by the defender, which specifies how severely the adversary will be punished if he is caught. The defender may try to set a high punishment level $x$ in order to deter the adversary. However, punishment is not
free --- the defender incurs a cost for punishing, 
e.g., for creating a tense work environment. 
Blocki et al.~\shortcite{bcdps2013audit} designed an efficient algorithm for computing an optimal strategy to commit to.

}

\smallskip
\noindent\textbf{Our Contributions.}
We present a generalized Stackelberg audit game model that accounts for multiple audit resources where 
each resource is restricted to audit a subset of all potential violations, thus enabling application to the 
practical auditing scenarios described above. Our main theoretical result is a \emph{fully polynomial time 
approximation scheme (FPTAS)} to compute an approximate solution to the resulting non-convex 
optimization problem. 

To arrive at this FPTAS, we begin with a simple \emph{fixed parameter tractable (FPT) algorithm} that 
reduces the non-convex optimization problem to a linear programming problem by fixing the 
punishment parameter at a discrete value. Since we perform a linear search over all possible 
discrete values of the punishment parameter over a fixed range in increasing intervals of size 
$\epsilon$, we get an FPT algorithm once the bit precision is fixed.  

Next we present an \emph{optimization transformation} that reduces the number of variables in the 
optimization problem at the cost of generating additional constraints. We also provide sufficient 
conditions that guarantee that the number of constraints is polynomial in size. Significantly, these 
conditions are satisfied in important practical auditing scenarios, including the hierarchical and 
specialized audit settings discussed earlier. The design and correctness proof of this optimization 
transformation constitutes the central novel technical contribution of this paper. 
Finally, we present an FPTAS to compute the defender's strategy leveraging the output of the 
optimization transformation when it generates a polynomial number of constraints.

In addition to its role in enabling the design of an FPTAS, a practical motivation for designing the 
optimization transformation is its promise of speeding up computing game solutions using the FPT 
algorithm. We experimentally demonstrate that the transformation produces speedups of up to $3\times$
for audit game instances and over $100\times$ for associated security game instances. In general, 
the speedups are higher as the problem size increases. 

As an additional contribution, we consider audit scenarios where the defender can set a different 
punishment level for each target instead of setting a single, universal punishment level. We provide an 
FPT algorithm for this problem by designing a novel reduction of this setting to a second order cone 
program (SOCP).

\cut{
We first present a \emph{fixed parameter tractable} (FPT) algorithm to compute an optimal strategy for the defender in audit games with multiple resources.  A FPT
algorithm runs in polynomial time in the inputs when one of the input parameters
is considered fixed. In our case, we fix the bit-precision of the inputs, which is reasonable
given the utility values are never known with certainty~\cite{}.

Our main technical result is the design and analysis of an \emph{optimization transformation} to reduce the number of variables in the optimization problem, at the cost of generating additional constraints. We also give sufficient conditions that guarantee that the number of constraints is small (polynomial in size). Crucially, we argue in detail that many realistic audit settings satisfy the sufficient conditions, including settings in which localized auditing is employed, and settings in which the targets can be grouped together into a constant number of ``types'', such as celebrity health records or financial cases. 
We demonstrate experimentally the significant computational advantages when the number of additional constraints is small.

}

\smallskip
 \noindent\textbf{Related Work.}
Our work is most closely related to the paper of Blocki et al~\shortcite{bcdps2013audit}. We elaborate on the technical connections in the exposition and analysis of our results. 

Our work is also closely related to work on security 
games~\cite{tambe}. The basic approach to solving security games, and Stackelberg games more generally (where a leader chooses an optimal strategy, i.e., one that maximizes its utility assuming that the follower best responds) was introduced by Conitzer and Sandholm~\shortcite{ConitzerS06}; it does not scale well when the number of defender strategies is exponentially large, as is the case in most security games. Typical algorithms rely on formulating a mixed integer linear program, and employing heuristics; these algorithms do not provide provable running-time guarantees (see, e.g., Paruchuri et al.~\shortcite{paruchuri2009coordinating}; Kiekintveld et al.~\shortcite{Kiekintveld1558108}).
Often the problem is solved just for the coverage probabilities or marginals (probability of defending a 
target) with the hope that it would be implementable (i.e., decomposable into a valid distribution over allocations). In contrast, Korzhyk et al.~\shortcite{KorzhykCP10} give a polynomial time algorithm for security games where the attacker has multiple resources, each of which can only protect one target at a time (this restriction is known as \emph{singleton schedules}). Their main tool is the Birkhoff-Von Neumann Theorem, which we also apply. 

All our algorithms have theoretical results about their efficiency. Our optimization
transformation transforms the problem to one with only coverage probabilities as variables and adds additional constraints for the coverage probabilities that restricts
feasible coverage probabilities to be implementable. Thus, in contrast with Korzhyk et al.~\shortcite{KorzhykCP10}
our problem has much fewer variables at the cost of additional constraints.
Also, the punishment parameter makes our problem non-convex, and so our algorithms must leverage a suite of additional techniques.

\section{The Audit Games Model} \label{model}
An audit game features two players: the defender ($D$), and the        
adversary ($A$). The defender wants to audit $n$ targets $t_1, \ldots,  
t_n$, but has limited resources which allow for auditing only some      
of the $n$~targets. 
Concretely, these resources 
could include time spent by human auditors, computational resources
used for audit (e.g., automated log analysis), hardware (e.g., cameras) 
placed in strategic locations, among many other examples. The exact     
nature of these resources will depend on the specific audit problem     
considered. Rather than focusing on a specific type of audit, we denote 
resources available to the defender for audit as {\em inspection        
resources}. The                                                         
defender has $k$~{\em inspection 
resources} $\{s_1, \ldots, s_k\}$ at her disposal, with $k<n$.
Each inspection resource can be used to audit at most one target. 
Inspection resources are further constrained by the set of targets    
that they can audit: for instance, a human auditor may not have the     
expertise or authority 
to audit certain targets. We define a set of tuples 
$R$ such that a tuple $(j,i) \in R$ indicates that inspection resource  
$s_j$ cannot be used to audit target $t_i$.                             

A pure action of the defender chooses the allocation of inspection resources to 
targets. 
A randomized strategy is given by a probability distribution over pure actions. 
We use a compact form to represent a randomized strategy as a matrix    
of probabilities, with $p_{i}^j$ the probability of inspection resource $s_j$    
auditing target $t_i$ subject to the following constraints              
$$
\renewcommand{\arraystretch}{1.0}
\begin{array}{l l}
p_i = \sum_{j=1}^{k} p_{i}^j \leq 1, \quad \sum_{i=1}^{n} p_{i}^j \leq 1 \quad \mbox{for all $i,j$ and } &\\
p_{i}^j = 0 \mbox{ for all $(j, i) \in R$ and } \forall (j, i). \ p_{i}^j \geq  0 \mbox{ \ ,} &
\end{array}
\renewcommand{\arraystretch}{1.0}
$$
where $p_i$ denotes the probability that target $t_i$ is inspected. 
These constraints can be represented as {\em grid constraints} as
follows:
\renewcommand{\arraystretch}{1.1}
\begin{center}
\begin{tabular}{c c c c c c }
&$t_1$ & \multicolumn{2}{c}{$\ldots $} & $t_n$\\
\cline{2 -5}
$s_1$ & \multicolumn{1}{|c}{$p_1^1$} & \multicolumn{1}{|c}{... \ }  &
\multicolumn{1}{|c}{...} & 
\multicolumn{1}{|c|}{$p_n^1$} & $\sum_i p^1_i \leq 1$\\
\cline{2 -5}
... & \multicolumn{1}{|c}{...} & \multicolumn{1}{|c}{...} & 
\multicolumn{1}{|c}{...} &
\multicolumn{1}{|c|}{...} & ... \\
\cline{2 -5}
... & \multicolumn{1}{|c}{...} & \multicolumn{1}{|c}{...} & 
\multicolumn{1}{|c}{...} &
\multicolumn{1}{|c|}{...} & ... \\
\cline{2 -5}
$s_k$ & \multicolumn{1}{|c}{$p_1^k$} & \multicolumn{1}{|c}{...} & 
\multicolumn{1}{|c}{...} &
\multicolumn{1}{|c|}{$p_n^k$} & $\sum_i p^k_i \leq 1$\\
\cline{2 -5}
 & $\sum_j p^j_1 \leq 1$ & \multicolumn{2}{c}{$\ldots$} &
$\sum_j p^j_n \leq 1$ &\\

\end{tabular}
\end{center}
\renewcommand{\arraystretch}{1.0}
Such a matrix can be decomposed into pure actions efficiently. (See      
the full version for the Birkhoff-von Neumann~\shortcite{MR0020547} result presented in           
Korzhyk et al.~\shortcite{KorzhykCP10}, which enables the decomposition.) Furthermore, for  
every distribution over pure actions we can define an ``equivalent''    
strategy using our compact representation: the distributions will be  
equivalent in the sense that for every target $t_i$ the probability     
$p_i$ that that target is inspected is the same.                       

Similarly to the basic audit games model of Blocki et                   
al.~\shortcite{bcdps2013audit}, the defender also chooses a punishment  
``rate'' $x \in [0,1]$ such that if auditing detects an attack (i.e.,   
violation), the attacker is fined an amount~$x$. The adversary attacks  
one target such that given the defender's strategy the adversary's      
choice of attack is the best response.


We follow the notations of earlier papers on security games: 
let the utility of the defender be $U^a_D (t_i)$ when audited target $t_i$ was 
found to be attacked, 
and $U^u_D (t_i)$ when unaudited target $t_i$ was found to be attacked.
We assume that the attacks on unaudited targets are discovered by 
an external source.
Similarly, define the 
utility of the attacker as $U^a_A (t_i)$ when the attacked target $t_i$ is audited, 
and $U^u_A (t_i)$ when attacked target $t_i$ is not audited, excluding
any punishment imposed by the defender.
Attacks discovered externally are costly for the defender, thus, $U^a_D (t_i) \geq U^u_D (t_i)$.
Similarly, attacks not
discovered by internal audits are more beneficial to the attacker, 
thus, $U^u_A (t_i) \geq U^a_A (t_i)$. 


We can now define the full utility functions. The probabilities 
$p_1, \ldots, p_n$ of each target being audited are given by 
$p_i = \sum_{j=1}^{k} p_{i}^j$. Then, the utility of the 
defender when target $t_*$ is attacked is 
$
p_{*}U^a_D (t_*) + (1 - p_{*})U^u_D (t_*) - ax.
$
The defender pays a fixed cost $ax$ regardless of the outcome, where $a$ is a constant. 
The losses captured by the $-ax$ term includes loss due to creation of a fearful work environment and cost incurred in maintaining a surplus of employees in anticipation of suspension.
In the same scenario, the 
utility of the 
attacker when target $t_*$ is attacked is 
$
p_{*}(U^a_A (t_*) - x) + (1 - p_{*})U^u_A (t_*).
$
The attacker suffers punishment $x$ only when attacking an audited target. 

A possible extension of the model above is to account for immediate losses that the defender suffers by imposing a punishment, e.g., firing or suspending an employee requires time and effort to find a replacement. Mathematically, we can account for such losses by including an additional term within the scope of $p_*$  in the payoff of the defender:
$
p_{*}(U^a_D (t_*) - a_1 x ) + (1 - p_{*})U^u_D (t_*) - ax 
$, 
where $a_1$ is a constant. All our results (FPT and FPTAS algorithms) can be readily extended to handle
this model extension, which we present in the full version.

\medskip
\noindent\textbf{Equilibrium.} 
Under the Stackelberg equilibrium solution, the defender commits to a (randomized) strategy, followed by a best response by the adversary; the defender's strategy should maximize her utility.
The mathematical problem involves solving multiple optimization 
problems, one each for the case when attacking $t_*$ is in fact the best response of 
the adversary. Thus, assuming $t_{*}$ is the best response of the adversary, the 
$*^{th}$ optimization problem $P_*$ in our audit games setting is
$$
\begin{array}{ll}
\displaystyle\max_{p_{ij},x} & p_{*}U^a_D (t_*) + (1 - p_{*})U^u_D (t_*) - ax \ ,\\
\mbox{subject to}& \forall i \neq *. \ p_{i}(U^a_A (t_i) -x) + (1 - p_{i})U^u_A (t_i)  \\
&  \ \ \ \ \ \ \ \
\leq p_{*}(U^a_A (t_*) - x) + (1 - p_{*})U^u_A (t_*) \ , \\
& \forall j. \ 0 \leq \sum_{i=1}^{n} p_{i}^j \leq 1 \ ,  \\
& \forall i. \ 0 \leq p_i = \sum_{j=1}^{k} p_{i}^j \leq 1 \ , \forall (j, i). \ p_{i}^j \geq  0 \ ,\\
& \forall (j,i) \in R. \  p_{i}^j = 0 \ , \ \ \ 0 \leq x \leq 1 \ .
\end{array}
$$
The first constraint verifies that attacking $t_*$ is indeed a best response for the adversary. 

The auditor solves the $n$ problems $P_1, \ldots, P_n$ (which correspond to the cases where the best response is $t_1,\ldots,t_n$, respectively), and chooses the best among 
all these solutions to obtain the final strategy to be used for auditing. This is a generalization of the multiple LPs approach of Conitzer and Sandholm~\shortcite{ConitzerS06}.

For ease of notation, let 
$\Delta_{D,i} = U^a_D (t_i)  - U^u_D (t_i)$, $\Delta_{i} = 
U^u_A (t_i) - U^a_A (t_i)$ and $\delta_{i,j} = U^u_A (t_i) - U^u_A (t_j)$.
Then, $\Delta_{D, i} \geq 0$, $\Delta_{i} \geq 0$, and
 the objective can be written as $p_n \Delta_{D, *} - ax$, subject to the quadratic constraint $$
 p_i (-x - \Delta_i) + p_n (x + \Delta_{*}) + \delta_{i,*}  \leq 0 \ .$$

Without loss of generality we will focus on the $n$'th program $P_n$, that is, we let $*$ be $n$.

\smallskip
\noindent\textbf{Inputs.} The inputs to the above problem are specified in $K$ bit 
precision. Thus, the total length of all inputs is $O(nK)$. 


\section{Fixed-Parameter Tractable Algorithms} \label{algorithm}
In this section, we present our FPT algorithm for optimization problem $P_n$, followed by the optimization transformation that improves the FPT algorithm and enables the FPTAS in the next section. Finally, we briefly describe an extension of our algorithmic results to target-specific punishments. 

We start with the FPT for $P_n$. Our algorithm is based on the following straightforward observation: if we fix the punishment level $x$ then the $P_n$ becomes a linear program that can be solved in polynomial time. We therefore solve the optimization problem $P_n$ for discrete values of $x$ (with interval size $\epsilon$) and take the solution that maximizes the defender's utility. This approach provides the following guarantee:
\begin{theorem} \label{FPTLemma}
The above approach of solving for discrete values of $x$ is a FPT $\Theta(\epsilon)$-additive approximation algorithm for the problem $P_n$ if either the optimal value of $x$ is greater than a small constant or $\Delta_n \neq 0$; the bit precision is the fixed parameter.
\end{theorem}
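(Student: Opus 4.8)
The plan is to separate the claim into a running-time part and an approximation-quality part. For the running time I would first observe that once $x$ is held fixed, the objective $p_n\Delta_{D,n}-ax$ is linear in the coverage variables and the quadratic best-response constraint $p_i(-x-\Delta_i)+p_n(x+\Delta_n)+\delta_{i,n}\le 0$ becomes linear in the $p_i^j$, while the grid constraints are already linear; hence each fixed-$x$ subproblem is a linear program over $O(nk)$ variables and $O(n+k+|R|)$ constraints, solvable in time polynomial in the input length $O(nK)$. Since the range of $x$ is the fixed interval $[0,1]$ and we step through it in increments of $\epsilon$, the algorithm solves $\lceil 1/\epsilon\rceil$ such LPs; taking $\epsilon$ at the granularity of the bit precision makes $1/\epsilon$ a function of $K$ alone, so the total time is $f(K)\cdot\mathrm{poly}(nK)$, i.e. FPT once $K$ is fixed.

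For the approximation guarantee, let $(x^*,p^*)$ be an optimal solution of $P_n$ and let $\hat x$ be the nearest grid point, so that $|\hat x-x^*|\le\epsilon$. The idea is to exhibit a feasible point of the fixed-$\hat x$ LP whose objective is within $\Theta(\epsilon)$ of the optimum; since the algorithm returns the best LP value over all grid points, this suffices to conclude $V(\hat x)\ge V(x^*)-\Theta(\epsilon)$. I would keep $p_i=p_i^*$ for $i\ne n$ and set $p_n$ to the largest value still consistent with the constraints at $x=\hat x$, namely $\hat p_n=\min_i h_i(\hat x)$ where $h_i(x)=\bigl(p_i^*(x+\Delta_i)-\delta_{i,n}\bigr)/(x+\Delta_n)$. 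Because $\hat x$ differs from $x^*$ by at most $\epsilon$ and the $-ax$ term is linear, the only quantity left to control is the change in $\hat p_n$ relative to the optimal feasibility bound $p_n^*\le\min_i h_i(x^*)$.

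The crux, and the step I expect to be the main obstacle, is bounding $|h_i(\hat x)-h_i(x^*)|$. Differentiating gives $h_i'(x)=\bigl(p_i^*(\Delta_n-\Delta_i)+\delta_{i,n}\bigr)/(x+\Delta_n)^2$, whose numerator is bounded by the input magnitudes; thus $h_i$ is Lipschitz with constant $O\!\bigl(1/(x+\Delta_n)^2\bigr)$, and everything reduces to lower-bounding $x+\Delta_n$ on the interval between $\hat x$ and $x^*$. This is exactly where the two hypotheses enter: if $\Delta_n\neq0$ then $x+\Delta_n\ge\Delta_n>0$ for every $x\ge0$, whereas if $x^*$ exceeds a small constant $c_0$ then $\hat x\ge c_0-\epsilon$ and we may restrict attention to $x\ge c_0/2$, so in either case $x+\Delta_n$ is bounded below by a fixed positive quantity and the Lipschitz constant $M$ is bounded. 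Consequently $\hat p_n\ge p_n^*-M\epsilon$, and the objective at the constructed point is at least $p_n^*\Delta_{D,n}-ax^*-\Theta(\epsilon)$, giving the claimed additive guarantee. I would close by noting why the excluded case is genuine: when $\Delta_n=0$ and $x\to0^+$ the bound $h_i$ has a pole, the value function $V(x)=\max_{p}\,(p_n\Delta_{D,n}-ax)$ can jump at $x=0$, and no Lipschitz bound is available; a short edge-case check disposes of $\hat p_n<0$, which can only occur when $p_n^*=O(\epsilon)$ and hence costs only $\Theta(\epsilon)$ in the objective.
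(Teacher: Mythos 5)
Your proposal is correct and follows essentially the same route as the paper's proof: both construct a feasible point at a nearby grid value of $x$ by keeping $p_i$ fixed for $i \neq n$ and shrinking $p_n$ just enough to restore the best-response constraints, then bound the objective loss by a quantity of order $\epsilon/(x+\Delta_n)$, with the two hypotheses --- together with the fixed bit precision, which turns $\Delta_n \neq 0$ into $\Delta_n \geq 2^{-K}$, a point worth stating explicitly --- supplying the needed lower bound on $x+\Delta_n$. The paper rounds $x^o$ up to the next grid point and computes the explicit decrement $\tau = \max\bigl\{0, \max_{i} \epsilon'(p_n^o - p_i^o)/(x^o+\epsilon'+\Delta_n)\bigr\}$ with a two-case analysis ($p_n^o \geq \tau$ versus $p_n^o < \tau$) that corresponds exactly to your Lipschitz bound on the envelopes $h_i$ and your $\hat p_n < 0$ edge case, so the two arguments differ only in packaging.
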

\begin{proof}[Proof Sketch]
The proof proceeds by arguing how much the objective changes when the value of $x$ is changed by less
than $\epsilon$. The exact algebraic steps are in the full version.
\end{proof}
We emphasize that fixing the bit precision is reasonable, because inputs to the game model are never known with certainty, and therefore high-precision inputs are not used in practice (see, e.g., Nguyen et al.~\shortcite{Nguyen:2014:SCU:2615731.2615784}, Kiekintveld et al.~\shortcite{Kiekintveld:2013:SGI:2484920.2484959}, Blum et al.~\shortcite{BHP14}).


A na\"ive approach to improving our FPT algorithm is to conduct a binary search on the punishment rate $x$. This approach may fail, though, as the solution quality is not single-peaked in $x$. We demonstrate this in the full version using an explicit example.
Instead, we describe a transformation of the optimization problem, which will enable
a FPTAS for our problem under certain restrictions.

\subsection{Extracting constraints for $p_i$'s}
The transformation 
eliminates variables $p_i^j$'s and instead extracts inequalities (constraints) for the variables
$p_i$'s from the constraints below (referred to as grid constraints)
$$
\renewcommand{\arraystretch}{1.0}
\begin{array}{l}
\forall i. \ 0 \leq p_i = \sum_{j=1}^k p_i^j \leq 1 \ , \forall j. \ 0 \leq \sum_{i=1}^n p_i^j \leq 1 \ ,  \\
\ \forall (j,i). \ p_i^j \geq  0 \ , \forall (j,i) \in R. \  p_{i}^j = 0 \ . 
\end{array}
$$

Consider any subset of inspection resources $L$ with the resources in $L$ indexed by $s_1, \ldots, s_{|L|}$ ($|L| \leq k$). We let $M=OnlyAuditedBy(L) \subset \{t_1$,$\ldots,t_n\}$ denote the subset of targets that can only be audited by a resource in $L$ (e.g., for every target $t_i \in M$ and every resource $s_j \notin L$ the resource $s_j$ cannot inspect the target $t_i$) . For notational convenience we assume that $M$ is indexed by 
$t_1, \ldots, t_{|M|}$. Then,  in case $|L| < |M|$, we obtain a constraint $p_{t_1}+ 
\ldots + p_{t_{|M |}}\leq |L|$ because there are only $|L|$ resources that could be used to audit these $|M|$ targets. We call such a constraint $c_{M,L}$. Consider the set of
all such constraints $C$ defined as
$$
\{c_{M,L} ~|~ L \in 2^S, \mbox{$M = OnlyAuditedBy(L)$}, |L| < |M|\}
$$
where $S= \{s_1, \ldots, s_k\}$ and $T = \{t_1, \ldots, t_n\}$.

\begin{lemma} \label{reduction}
The optimization problem $P_*$ is equivalent to the optimization problem obtained by replacing
the grid constraints by $C \cup \{0 \leq p_i \leq 1\}$ in $P_*$.
\end{lemma}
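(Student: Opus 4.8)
The plan is to prove the sharper statement that the two feasible regions coincide once projected onto the coverage variables $p_1,\dots,p_n$. Because both the objective $p_*\Delta_{D,*}-ax$ and the adversary's best-response constraints depend only on the $p_i$ and on $x$, and never on the individual allocation variables $p_i^j$, it is enough to show that a vector $(p_1,\dots,p_n)$ arises as the target marginals $p_i=\sum_j p_i^j$ of some matrix $(p_i^j)$ obeying the grid constraints if and only if $0\le p_i\le 1$ for every $i$ and every constraint in $C$ holds. Granting this, the two programs have identical feasible sets in $(p_1,\dots,p_n,x)$-space and therefore the same optimum.

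For the forward direction (grid constraints $\Rightarrow$ $C$) I would argue by direct summation. Fix a feasible $(p_i^j)$ and any $L$ with $M=OnlyAuditedBy(L)$. Each $t_i\in M$ can be audited only by resources in $L$, so $p_i=\sum_{s_j\in L}p_i^j$; summing over $t_i\in M$ and exchanging the order of summation gives $\sum_{t_i\in M}p_i=\sum_{s_j\in L}\sum_{t_i\in M}p_i^j\le\sum_{s_j\in L}\sum_{i=1}^n p_i^j\le|L|$, which is exactly $c_{M,L}$; the box bounds $0\le p_i\le 1$ are immediate.

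The substantive direction is the converse. Assuming $0\le p_i\le 1$ and all of $C$, I must exhibit an allocation $(p_i^j)$, and I would cast its existence as a feasibility question for a bipartite transportation instance: a source feeding each resource $s_j$ through an edge of capacity $1$, an edge of capacity $+\infty$ from $s_j$ to every target it is permitted to audit, and an edge of capacity $p_i$ from each $t_i$ to the sink. A valid matrix exists precisely when this network can route $\sum_i p_i$ units, which by max-flow/min-cut (a fractional form of Hall's theorem) is equivalent to the deficiency condition: for every set of targets $M'$, $\sum_{t_i\in M'}p_i\le|N(M')|$, where $N(M')$ denotes the resources able to audit at least one target of $M'$. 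The whole reverse direction thus reduces to deriving this family of Hall inequalities from the constraints in $C$ together with $p_i\le 1$.

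The crux, which I expect to be the main obstacle, is to show that the constraints in $C$ --- generated only when $|L|<|M|$ --- already imply the Hall inequality for every $M'$. Given an arbitrary $M'$, set $L:=N(M')$ and $M:=OnlyAuditedBy(L)$; since every target of $M'$ has all of its admissible resources inside $L$, we get $M'\subseteq M$ and hence $\sum_{t_i\in M'}p_i\le\sum_{t_i\in M}p_i$. A case split then closes the argument: if $|L|<|M|$ the constraint $c_{M,L}\in C$ yields $\sum_{t_i\in M}p_i\le|L|=|N(M')|$; and if $|L|\ge|M|$ the box bounds give $\sum_{t_i\in M'}p_i\le|M'|\le|M|\le|L|=|N(M')|$. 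Either way the Hall inequality holds, the transportation instance is feasible, and the required matrix exists. Explaining precisely why discarding the $|L|\ge|M|$ cases from $C$ costs nothing is the heart of the correctness proof.
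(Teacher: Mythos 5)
Your proposal is correct, and the hard direction takes a genuinely different route from the paper's. Both proofs agree on the framing (project onto the $p_i$'s, since objective and quadratic constraints ignore the $p_i^j$'s) and on the easy direction by summation. For the converse, the paper argues via vertices: it reduces to extreme points of the polytope $C \cup \{0 \le p_i \le 1\}$ by convexity, proves through a lengthy induction on $|R|$ (relegated to the appendix) that every extreme point is $0/1$, and only then applies Hall's marriage theorem to extract a perfect matching of the targets set to $1$ into the resources. You bypass integrality entirely: your transportation network (unit supplies at resources, demands $p_i$ at targets, restricted edges absent) is feasible, by max-flow/min-cut in its Gale supply--demand form, exactly when the fractional Hall inequalities $\sum_{t_i \in M'} p_i \le |N(M')|$ hold for \emph{every} target set $M'$, and your derivation of these from $C$ plus the box bounds is sound: taking $L = N(M')$ gives $M' \subseteq M = OnlyAuditedBy(L)$, and the case split on $|L| < |M|$ versus $|L| \ge |M|$ closes both branches. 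What each approach buys: yours is shorter and skips the most delicate piece of the paper's argument (the extreme-point induction); it is constructive, since a single max-flow computation recovers the $p_i^j$'s, which substantiates the paper's passing remark that this recovery is an efficient linear feasibility problem; and it makes explicit why discarding the $|L| \ge |M|$ constraints from $C$ costs nothing, a point the paper merely asserts as "redundant" before re-adding those constraints for its induction. The paper's route, in exchange, establishes a strictly stronger structural fact --- the vertices of the reduced polytope are pure allocations --- which dovetails with the Birkhoff--von Neumann decomposition the paper invokes for implementability, though that extra strength is not needed for the equivalence claimed in this lemma.
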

\begin{proof}[Proof Sketch]
We present a sketch of the proof with the details in the full version.
As the optimization objective depends on variables $p_i$'s only, and quadratic constraints are identical in both problems we just need to show that 
the regions spanned by the variables $p_i$'s, as specified by the linear constraints, are the same in both problems. As one direction of the inclusion is easy, we show the harder case below.

Let $C^+$ denote the convex polytope $C \cup \{0\leq p_i\leq 1 \}$. Given a
point $(p_1,...,p_n) \in C^+$ we want to argue that we can find values
$p_i^j$'s satisfying all of the grid constraints. We first note that it
suffices to argue that we can find feasible $p_i^j$'s for any extreme point
in $C^+$ because any point in $C^+$ can be written as a convex combination of
its extreme points (Gallier 2008). Thus, we could find feasible $p_i^j$'s
for any point in $C^+$ using this convex combination.

In the full version we prove that each extreme point in $C^+$ sets the
variables $p_1,...,p_n$ to $0$ or $1$. Let $k'$ denote the number of ones in an
extreme point. Note that $k' \leq k$ because one of the inequalities is $p_1 +
... + p_n \leq k$.
Consider the undirected bipartite graph linking the inspection nodes to the target nodes, 
with a link indicating that the inspection can audit the linked target. This graph is known
from our knowledge of $R$, and each link in the graph can be labeled by one of the $p_i^j$
variables. Let $S'$ be the set of targets picked by the ones in any
extreme points. We claim that there is a perfect matching from $S'$ to the the set of 
inspection resources (which we prove in next paragraph). Given such a perfect matching, assigning
$p_i^j = 1$ for every edge in the matching yields a feasible solution, which completes
the proof.

We prove the claim about perfect matching by
contradiction. Assume there is no perfect matching, then there must be a set 
$S'' \subseteq S'$, such that $|N(S'')| < |S''|$ ($N$ is the neighbors function --- this result follows from Hall's theorem). As $S ''\subseteq S'$ it must hold that 
$p_i = 1$ for all $i \in index(S'')$ (function index gives the indices of the set of targets). Also, the set of targets $S''$ is audited
only by inspection resources in $N(S'')$ and, by definition of $C$, we must have a constraint
$
\sum_{i \in index(S'')} p_i \leq |N(S'')| \ .
$
Using $|N(S'')| < |S''|$, we get 
$
\sum_{i \in index(S'')} p_i < |S''| \ .
$
But, since $|index(S'')| = |S''|$, we conclude that all $p_i$ for targets in $S''$ cannot be 
one, which is a contradiction.
\end{proof}
 Observe that obtaining $p^j_i$'s from the $p_i$'s
involves solving a linear feasibility problem, which can be done efficiently.

Importantly, the definition of $C$ is constructive and provides an algorithm to compute it. However, the algorithm has a worst-case running time exponential in $k$. 
Indeed, consider $k$ resources $s_1, \ldots, s_k$ and $2k$ targets $t_1, \ldots, t_{2k}$. Each resource $s_i$ can inspect targets
$t_1, t_2, t_{2i-1} , t_{2i}$. For each set of $k/2$ resources $L \subseteq \{s_2$,$\ldots,s_k\}$ we have $M =$ $OnlyAuditedBy(L)$ $=\{t_1,t_2\}\cup \left(\bigcup_{s_j \in L} \{t_{2j-1},t_{2j} \} \right)$ . Observe that $|M| = k+2 > k/2 = |L|$ so for each $L \subseteq \{s_2$,$\ldots,s_k\}$ we get a new constraint $c_{M,L}$. Thus, we get $k-1 \choose k/2$
constraints.

\subsection{Conditions for Poly. Number of Constraints} \label{polyconstraints}
Motivated by the above observation, we wish to explore an alternative method for computing $C$. We will also provide sufficient
conditions under which $|C|$ is polynomial. 

 

The intuition behind our alternative algorithm is that instead of iterating over sets of inspection resources, we could iterate over sets of targets. As a first step, we identify equivalent targets and merge them. Intuitively, targets that can be audited by the exact same set of inspections are equivalent. Formally, $t_i$ and $t_k$ are equivalent if $F(t_i) = F(t_k)$ where  $F(t_\ell) = \{s_j~\vline~(j,\ell)\notin R\}$. 

The algorithm
is formally given as Algorithm~\ref{algmult2}.
It builds an intersection graph from the merged targets: every merged set of targets is a node, and two 
nodes are linked if the two sets of inspection resources corresponding to the 
nodes intersect. The algorithm iterates through every connected induced
sub-graph and builds constraints from the targets associated with the 
nodes in the sub-graphs and the set of inspection resources associated with them.
The next lemma proves the correctness of the Algorithm~\ref{algmult2}.

\begin{algorithm}[ht] \label{algmult2}
\DontPrintSemicolon
Compute $F$, the map from $T$ to $2^{\{s_1, \dots, s_k\}}$ using $R$.\;
Merge targets with same $F(t)$ to get set $T'$ and a map $W$, where $W(t') = $ \#merged targets that yielded $t'$\;
Let $PV(t')$ be the set of prob. variables associated with $t'$, one each from 
the merged targets that yielded $t'$\;
Form an intersection graph $G$ with nodes $t' \in T'$ and edge set $E = \left\{\{t_i\rq{},t_k\rq{}\} ~\vline F(t_i\rq{})\cap F(t_k\rq{})\neq \emptyset \right\}$ 


$L \leftarrow \mathsf{CONNECTEDSUBGRAPHS}(G)$ \;
$C \leftarrow \phi$\;
\For{$l \in L$}{
	Let $V$ be all the vertices in $l$ \;
	$P \leftarrow \bigcup_{v \in V} PV(v)$\;
	$k \leftarrow \sum_{v \in V} W(t')$\;
	\If{$|P| > k$}{
		$C \leftarrow C \cup \{\sum_{p \in P} p \leq k\}$
	}
}
\Return $C$\;
\caption{$\mathsf{CONSTRAINT\_FIND}(T, R)$}
\end{algorithm}

\begin{lemma} \label{samealg}
$\mathsf{CONSTRAINT\_FIND}$ outputs constraints that define the same 
convex polytope in $p_1, \ldots, p_n$ as the constraints output by the na\"ive
algorithm (iterating over all subsets of resources).
\end{lemma}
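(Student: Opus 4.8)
The plan is to prove that the two constraint systems carve out the same polytope by establishing \emph{mutual implication}: every inequality emitted by the naïve subset-enumeration follows from the $\mathsf{CONSTRAINT\_FIND}$ output together with the box constraints $0\le p_i\le 1$, and conversely. Since each system cuts out a polytope as an intersection of half-spaces, showing that the defining inequalities of one are logical consequences of the other (in both directions) is exactly equivalent to the two polytopes coinciding, so no further work is needed once both implications are in hand. For a set $V$ of merged targets I would write $N(V)=\bigcup_{t\in V}F(t)$ for the resources that can audit some target of $V$ (matching the neighbor notation used in Lemma~\ref{reduction}) and $w(V)$ for the number of \emph{original} targets represented by $V$; the constraint attached to a connected subgraph $V$ is then $\sum_{t\in V}p_t\le |N(V)|$, emitted precisely when $w(V)>|N(V)|$.

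First I would record that merging equivalent targets is sound: if $F(t_i)=F(t_k)$ then $t_i\in OnlyAuditedBy(L)$ iff $F(t_i)=F(t_k)\subseteq L$ iff $t_k\in OnlyAuditedBy(L)$, so equivalent targets always sit simultaneously inside or outside any naïve set $M=OnlyAuditedBy(L)$; hence every such $M$ is a union of equivalence classes, and reasoning about merged nodes (each carrying the probability variables of all the originals it represents) loses nothing. The easy direction then follows quickly. Take a connected $V$ that emits $\sum_{t\in V}p_t\le |N(V)|$ and set $L=N(V)$, $M=OnlyAuditedBy(L)$. Every $t\in V$ has $F(t)\subseteq N(V)=L$, so $V\subseteq M$, and the emission condition gives $|M|\ge w(V)>|N(V)|=|L|$, so the naïve algorithm produces $\sum_{t\in M}p_t\le |L|$. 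As $V\subseteq M$ and all $p_t\ge 0$, this naïve inequality implies the emitted one.

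The substantive direction, which I expect to be the main obstacle, is to show that each naïve constraint $\sum_{t\in M}p_t\le |L|$ (with $M=OnlyAuditedBy(L)$) follows from the $\mathsf{CONSTRAINT\_FIND}$ output. The crux is a structural claim: the connected components $M_1,\dots,M_r$ of the induced intersection graph $G[M]$ have \emph{pairwise disjoint} resource sets, since $N(M_i)\cap N(M_j)\neq\emptyset$ would place a common resource in $F(t)\cap F(t')$ for some $t\in M_i$, $t'\in M_j$, forcing an edge between distinct components. This disjointness yields $\sum_i|N(M_i)|=|N(M)|\le |L|$, the last step because $F(t)\subseteq L$ for every $t\in M$. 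For each component I would then establish $\sum_{t\in M_i}p_t\le |N(M_i)|$ under the $\mathsf{CONSTRAINT\_FIND}$ polytope by case analysis: if $w(M_i)>|N(M_i)|$ the component is a connected induced subgraph of $G$ and emits exactly this inequality, while if $w(M_i)\le |N(M_i)|$ the bound follows from the box constraints since $\sum_{t\in M_i}p_t\le w(M_i)\le |N(M_i)|$. Summing over components and using the disjointness gives $\sum_{t\in M}p_t\le\sum_i|N(M_i)|\le |L|$, the desired naïve constraint; this is also precisely what shows that restricting $\mathsf{CONSTRAINT\_FIND}$ to connected subgraphs loses no information. The delicate points to verify are that each $M_i$ is genuinely a connected induced subgraph of the full graph $G$ (so the algorithm really does consider it), the disjointness argument above, and the fallback to $p_t\le 1$ for components that emit no constraint.
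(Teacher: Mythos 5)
Your proof is correct, and while it targets the same structural fact as the paper --- connectivity in the intersection graph is what separates essential constraints from redundant ones --- the mechanics are genuinely different and, in places, tighter. The paper argues a constraint-by-constraint \emph{correspondence}: for the direction from the na\"ive output to $\mathsf{CONSTRAINT\_FIND}$ it restricts attention to ``non-implied'' constraints, argues the target set of any such constraint must be connected (else it would split), and concludes the algorithm considers it; for the converse it asserts that the constraint emitted for a connected subgraph $V$ \emph{is the same as} the na\"ive constraint for $L=\bigcup_{t\in V}F(t)$. You instead prove mutual \emph{implication} of the two systems, and this buys two real improvements. First, in the hard direction, your decomposition of $M=OnlyAuditedBy(L)$ into components of $G[M]$ with pairwise disjoint resource sets, the fallback $\sum_{t\in M_i}p_t\le w(M_i)\le |N(M_i)|$ from the box constraints for components below the emission threshold, and the slack $|N(M)|\le |L|$ together give an explicit derivation of every na\"ive constraint; the paper achieves this only implicitly through its ``non-implied'' restriction, whose stated definition (exact sums of other constraints) is anyway too narrow, since one also needs nonnegativity and weakening of right-hand sides. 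Second, in the easy direction you correctly note that when $V$ is a non-maximal connected subgraph, $OnlyAuditedBy(N(V))$ may strictly contain $V$, so the emitted constraint is merely implied by the na\"ive one via $p_t\ge 0$ rather than identical to it, and you verify the na\"ive emission condition $|L|<|M|$ via $|M|\ge w(V)>|N(V)|$ --- both points the paper's proof glosses over. Your preliminary observation that every $OnlyAuditedBy(L)$ is a union of equivalence classes is likewise needed (and only tacit in the paper) for $G[M]$ to be well defined on merged nodes. One housekeeping remark: you silently read the pseudocode line $k\leftarrow\sum_{v\in V}W(t')$ as the intended $k\leftarrow\bigl|\bigcup_{v\in V}F(v)\bigr|$, so that the test $|P|>k$ compares original targets against associated resources; as written the test could never fire, and your reading is the one consistent with the surrounding text and with the use of $C$ in Lemma~\ref{reduction}.
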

The proof appears in the full version. The algorithm  is clearly not polynomial time in general, because it iterates over all connected subgraphs. The next lemma provides sufficient conditions for polynomial running time.

\begin{lemma} \label{graphlemma}
$\mathsf{CONSTRAINT\_FIND}$ runs in polynomial time if at least one of the following conditions holds:
\begin{itemize}
\item The intersection graph has $O(\log n)$ nodes.
\item The intersection graph has constant maximum degree and a constant number of nodes with degree at least $3$.
\end{itemize}
\end{lemma}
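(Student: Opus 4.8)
The plan is to show that the running time of $\mathsf{CONSTRAINT\_FIND}$ is dominated by the call to $\mathsf{CONNECTEDSUBGRAPHS}(G)$ and the subsequent loop, each iteration of which does only polynomial work (forming $P=\bigcup PV(v)$, summing the weights, and testing a threshold). Everything before it---computing $F$, merging equivalent targets into $T'$, and building the intersection graph $G$---is clearly polynomial. Hence it suffices to prove, under either condition, that (i) the number of connected induced subgraphs of $G$ is polynomial in $n$, and (ii) they can be listed in total polynomial time. For (ii) I would either invoke a standard polynomial-delay connected-subgraph enumeration routine, or, under each condition, use the explicit enumeration described below; so the real content is the counting bound (i).

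For the first condition the argument is direct: if $G$ has $O(\log n)$ nodes then it has at most $2^{O(\log n)} = n^{O(1)}$ subsets of nodes, so one can enumerate all subsets, discard the disconnected ones (an $O(|V|+|E|)$ connectivity test each), and be left with polynomially many connected subgraphs produced in polynomial time.

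The substance is the second condition, and this is where I expect the main difficulty. Let $H$ be the set of nodes of degree at least $3$; by hypothesis $|H|=c$ is constant and every node has degree at most $d$, also constant. First I would establish a structural decomposition: since every node outside $H$ has degree at most $2$, the graph $G\setminus H$ is a disjoint union of paths and cycles. The two key claims are that (a) every vertex on a cycle of $G\setminus H$ already has degree $2$ there, hence has no edge to $H$, so each such cycle is an entire connected component of $G$; and (b) along a path of $G\setminus H$ only the two endpoints can carry an edge to $H$, since internal vertices already have degree $2$. Consequently any connected component $G_0$ of $G$ that meets $H$ consists of the at most $c$ high-degree nodes $H_0=H\cap G_0$ together with paths attached to them only at their endpoints, and---because each node of $H$ has degree at most $d$---at most $cd$ such paths attach to $H_0$. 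This is precisely where both hypotheses are used: the constant $c$ caps the branch vertices and the constant degree $d$ caps the number of paths hanging off them; dropping either allows a star whose leaf-subsets already number $2^{\Theta(n)}$.

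Given this decomposition I would bound the connected subgraphs. Components that are bare paths or cycles (disjoint from $H$) contribute only $O(m_i^2)$ connected subgraphs each (contiguous segments, respectively arcs and the full cycle), summing to $O(n^2)$ overall. For a component $G_0$ meeting $H$, a connected subgraph $S$ is determined by $S\cap H_0$ (at most $2^c$ choices) together with its trace on each attached path. The crucial observation---and the part I would write most carefully---is that connectivity forces the trace of $S$ on each path's interior to be a prefix and/or a suffix: a segment landing strictly inside the interior cannot reach the rest of $S$, since its only outside neighbours are the path endpoints, which connect onward only through an adjacent included $H$-node. This leaves $O(m_i^2)$ admissible traces per path. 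Multiplying the $2^c$ choices for $S\cap H_0$ by the $O(m_i^2)$ traces over the at most $cd$ paths yields $O(n^{2cd})$, a polynomial, and the same case analysis is itself a direct polynomial-time enumeration. Summing over the at most $c$ components meeting $H$ together with the bare path and cycle components gives a polynomial total, establishing both (i) and (ii) and hence the claimed running time.
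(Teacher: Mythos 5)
Your proof is correct, but it takes a genuinely different route from the paper's. For the first condition the two arguments coincide (enumerate all $2^{O(\log n)}=n^{O(1)}$ vertex subsets and test connectivity). For the second condition the paper argues by induction on the number $t$ of vertices of degree at least $3$: it removes a single such vertex, notes this yields at most $d$ components, derives the recurrence $T(n,d,t) \le \left(2T(n,d,t-1)\right)^{d+1}$, and bottoms out at the max-degree-$2$ base case (disjoint paths and cycles, with $O(n^2)$ connected subgraphs each --- the same observation as your bare-path/cycle analysis), yielding the bound $2^{(2(d+1))^{t+1}} n^{(d+1)^{t+1}}$. You instead delete all of $H$ at once and prove a one-shot structural decomposition: your claims (a) and (b) are sound, since any vertex outside $H$ has total degree at most $2$ in $G$, so a vertex with two neighbours in $G\setminus H$ (cycle vertex or path-interior vertex) can have no edge to $H$; and the count of at most $cd$ attached paths follows because each consumes a distinct edge incident to $H$. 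Your key step --- that a trace strictly interior to a path is disconnected from everything else, so a subgraph meeting $H$ is determined by its ($\le 2^c$) intersection with $H$ plus a prefix/suffix trace on each attached path --- is the right observation, and you correctly account separately for subgraphs that are a single interior segment; over-generating infeasible $(\text{subset},\text{trace})$ candidates is harmless since connectivity of each candidate is checked in polynomial time. Your approach buys two things over the paper's: a quantitatively sharper bound, roughly $2^c n^{O(cd)}$ with exponent polynomial in $c$ and $d$, versus the paper's exponent $(d+1)^{t+1}$, which is exponential in $t=c$; and an explicit polynomial-time listing procedure addressing your point (ii), where the paper only remarks in passing that its inductive proof is constructive.
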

\begin{proof}[Proof Sketch]
The detailed proof is in the full version. It is not hard to observe that we need sufficient conditions
for any graph to have polynomially many induced connected sub-graphs. The first case above is obvious
as the number of induced connected sub-graphs in the worst case (fully connected graph) is $2^N$, where
$N$ is number of nodes. The second case can be proved by an induction on the number of nodes in 
the graphs with degree greater than $3$. Removing any such vertex results in a constant number of disconnected components (due to constant max degree). Then, we can argue that the number of connected
sub-graphs of the given graph will scale polynomially with the max number of connected sub-graphs
of each component. The base case involves graphs of degree less than two, which is a graph with paths and cycles and such a graph has polynomially many connected sub-graphs.
\end{proof}

\paragraph{Why Are These Conditions Realistic?}
The conditions specified in Lemma \ref{graphlemma} capture a wide range of practical audit scenarios.

First, many similar targets can often be grouped together by type. In   
a hospital case, for instance, rather than considering each individual  
health record as a unique target worthy of specific audit strategies,   
it might make more sense to have identical audit policies for a small   
set of patient types (e.g., celebrities, regular folks...). Likewise,   
in the context of tax audits, one could envisage that individuals are
pooled according to their types (e.g., high income earners, expatriates, ...). In      
practice, we expect to see only a few different types.                  
Each type corresponds to a single node in the intersection graph, so that a constant number of types corresponds to a constant number of nodes. That is, both of the lemma conditions are satisfied, even though only one is required.

Second, auditing is often localized. For instance, when considering audits performed by corporate managers, one would expect these managers 
to primarily inspect 
the activities of their direct subordinates. 
This means that the inspection resources (inspection actions of a manager) auditing a node (activities of its subordinates) are disjoint from the 
inspection resources auditing any other node. Thus, our intersection    
graph has no edges, and the second lemma condition is satisfied. Slightly more complex situations, where, for instance, employees' activities are audited by two different managers, still satisfy the second condition.


\vspace{-1pt}

\subsection{Target-Specific Punishments} We present a brief overview of target-specific
punishments with the details in the appendix.
We extend our model to target-specific punishments by augmenting the program $P_n$: we use individual 
punishment levels $x_1, \ldots, x_n$, instead of using the same punishment $x$ for 
each target. The new optimization problem $PX_{n}$ differs from $P_n$ only in
(1) objective: $\max_{p_{i},x}  p_{n} \Delta_{D,n} - \sum_{j \in \{1, \ldots, n\}} a_j x_j$ and (2) quadratic constraints: $p_i (-x_i - \Delta_i) + p_n (x_n + \Delta_n) + \delta_{i,n}  \leq 0$.

The na\"ive way of discretizing each of the variables $x_1, \ldots, x_n$
and solving the resulting sub-problems is not polynomial time. 
Nevertheless, we show that 
it is possible to design an FPT approximation algorithm by discretizing only 
$p_n$, and casting the resulting sub-problems as  
second-order cone problems (SOCP), which can be solved in polynomial time~\cite{boyd2004convex}.
We first present the following intuitive result:
\begin{lemma} \label{targetlemma}
At the optimal point for $PX_n$, $x_n$ is always 0. Further, discretizing values of $p_n$ with interval size $\epsilon$ and solving resulting sub-problems yields a $\Theta(\epsilon)$ approximation.
\end{lemma}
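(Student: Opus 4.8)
The plan is to prove the two assertions in turn, using the first to unlock the second. For the claim that $x_n = 0$ at every optimum, I would first locate where $x_n$ occurs in $PX_n$: only in the objective, through the cost term $-a_n x_n$, and in each best-response constraint, through the term $p_n(x_n + \Delta_n)$. Since $a_n \ge 0$, $p_n \ge 0$ and $x_n \ge 0$, lowering $x_n$ to $0$ weakly increases the objective and weakly decreases the left-hand side $p_i(-x_i - \Delta_i) + p_n(x_n + \Delta_n) + \delta_{i,n}$ of every constraint, and hence cannot destroy feasibility. Thus from any optimal solution I can set $x_n = 0$ and remain optimal (strictly so when $a_n > 0$). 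This is a one-line monotonicity/exchange argument.

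For the second assertion I would proceed in two steps: show that fixing $p_n$ (with $x_n = 0$ forced by the first part) yields an SOCP, and then bound the discretization error. With $p_n$ held at a constant, the objective $p_n \Delta_{D,n} - \sum_{j \neq n} a_j x_j$ is linear, and the coverage constraints together with the box constraints $0 \le x_i \le 1$ are linear. Each remaining quadratic constraint rearranges to the hyperbolic form $p_i(x_i + \Delta_i) \ge c_i$, where $c_i = p_n \Delta_n + \delta_{i,n}$ is now a constant. When $c_i < 0$ the constraint is vacuous (its left side is nonnegative) and may be dropped; when $c_i \ge 0$, using $p_i \ge 0$ and $x_i + \Delta_i \ge 0$ it is equivalent to the second-order cone constraint $\left\| (\, 2\sqrt{c_i},\ p_i - x_i - \Delta_i \,) \right\|_2 \le p_i + x_i + \Delta_i$. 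Hence each fixed-$p_n$ subproblem is a genuine SOCP and is solvable in polynomial time.

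The approximation step is a rounding argument. I would discretize $p_n$ over $\{0, \epsilon, 2\epsilon, \ldots, 1\}$, which gives $O(1/\epsilon)$ subproblems --- polynomially many once the bit precision is fixed --- solve the SOCP at each grid value, and return the best. To bound the loss, let $(p^*, x^*)$ be an optimum with $x_n^* = 0$, and pick the grid point $\hat p_n$ with $\hat p_n \le p_n^* < \hat p_n + \epsilon$. Replacing only $p_n^*$ by the smaller $\hat p_n$ in $(p^*, x^*)$ stays feasible for the subproblem at $\hat p_n$: lowering $p_n$ relaxes every coverage constraint (each an upper bound on a sum of inspection probabilities), and since $\Delta_n \ge 0$ it lowers each $c_i$, so each constraint $p_i^*(x_i^* + \Delta_i) \ge c_i$ still holds. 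This feasible point leaves $\sum_{j \neq n} a_j x_j^*$ unchanged, so the subproblem optimum at $\hat p_n$ attains objective at least $\hat p_n \Delta_{D,n} - \sum_{j \neq n} a_j x_j^* \ge \mathrm{OPT} - \epsilon \Delta_{D,n}$; and since every grid solution is feasible for $PX_n$, its value is at most $\mathrm{OPT}$. The best grid solution therefore lies within $\epsilon \Delta_{D,n} = \Theta(\epsilon)$ of $\mathrm{OPT}$.

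The hard part will be the SOCP reformulation rather than the rounding: the key realization is that, after eliminating $x_n$, the only surviving nonconvexity is a family of independent bilinear terms $p_i x_i$, each of which becomes a hyperbolic constraint with constant right-hand side and is therefore second-order-cone representable. It is exactly the removal of $x_n$ (which kills the cross term $p_n x_n$) that makes $c_i$ constant; had $x_n$ remained free, the right-hand side would depend on a variable and the constraint would no longer be convex, so the first assertion is genuinely what enables the reduction. The remaining care is bookkeeping: verifying the sign conditions $p_i \ge 0$ and $x_i + \Delta_i \ge 0$ and the vacuous case $c_i < 0$ that make the rotated-cone form valid.
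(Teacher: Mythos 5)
Your proposal is correct and takes essentially the same route as the paper: the same one-line monotonicity argument forcing $x_n=0$ (the paper implicitly assumes $a_n>0$ where you carefully note the ``weakly/strictly'' distinction), and the same rounding of $p_n$ down to the nearest grid point, with feasibility preserved because lowering $p_n$ relaxes both the quadratic constraints (as $\Delta_n\geq 0$) and the linear ones, at an objective loss of $O(\epsilon)$ that the grid SOCP then matches. Your explicit rotated-cone reformulation $\left\|\left(2\sqrt{c_i},\ p_i-x_i-\Delta_i\right)\right\|_2\leq p_i+x_i+\Delta_i$ reproduces the reduction the paper carries out in the text surrounding the lemma rather than in the lemma's proof itself (and is stated correctly where the paper's generic form has a small $k$-versus-$k'$ slip).
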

The proof is in the appendix.
Next, we show that for fixed values of $x_n$ and $p_n$,  $PX_{n}$  reduces to an SOCP. We first describe a general SOCP problem (with variable $y \in \mathbb{R}^n$) that maximizes a linear objective $f^Ty$ subject to linear constraints $Fy = g$ and $m$ quadratic constraints of the form
$$
 \forall i \in \{1, \ldots, m\}.~||A_iy + b_i||_2 \leq c_i^T y + d_i 
$$
In particular, the constraint  
$
 k^2/4 \leq y_i(y_j + k')
 $
 is the same as $||[k~~(y_i - y_j - k')]^T||_2 \leq y_i + y_j + k$, which is an instance
 of the quadratic constraint above for appropriate $A,b, c, d$.

Our problem can be cast as an SOCP by rewriting the quadratic constraints
as
$$
  p_n (x_n + \Delta_n) + \delta_{i,n}  \leq p_i (x_i + \Delta_i)
$$
Using our approach (discretizing $p_n$, $x_n = 0$) the LHS of the above inequality is a constant. If the
constant is negative we can simply throw out the constraint --- it is a tautology since the RHS is always positive. If
the constant is positive, we rewrite the constraint as a second-order constraint as 
described above (e.g., set $k = 2\sqrt{  p_n (x_n + \Delta_n) + \delta_{i,n} }$ and set $k\rq{} = \Delta_i$). The rest of the constraints are linear. Thus, the problem for each fixed value of
$p_n, x_n$ is an SOCP. Putting everything together, we get the following result.
\begin{theorem} \label{tsp}
The method described above is an FPT additive $\Theta(\epsilon)$-approximate algorithm for solving $PX_n$.
\end{theorem}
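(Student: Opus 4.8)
The plan is to assemble Theorem~\ref{tsp} from three ingredients that are already in place: the structural/approximation result of Lemma~\ref{targetlemma}, the SOCP reduction described immediately above the statement, and the polynomial-time solvability of SOCPs~\cite{boyd2004convex}. First I would invoke Lemma~\ref{targetlemma} to fix $x_n = 0$ and to replace the search over the continuous variable $p_n$ by a search over the $O(1/\epsilon)$ grid points $\{0, \epsilon, 2\epsilon, \ldots, 1\}$, carrying the guarantee that reporting the best sub-problem solution over this grid costs at most an additive $\Theta(\epsilon)$ relative to the true optimum of $PX_n$. Note that this accuracy claim presupposes that each individual sub-problem is solved \emph{exactly}, which is precisely what the SOCP reduction buys us.

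Second, I would verify that each sub-problem---obtained by fixing $p_n$ to a grid value $\hat p_n$ and setting $x_n = 0$---is a genuine SOCP in the remaining variables $\{p_i^j\}$, $\{p_i\}$, and $\{x_i\}_{i \neq n}$. With $p_n$ fixed, the objective $\hat p_n \Delta_{D,n} - \sum_j a_j x_j$ is linear (its first term is a constant), and the grid constraints, the identities $p_i = \sum_j p_i^j$, the box constraints, and the added linear constraint $\sum_j p_n^j = \hat p_n$ are all linear. Rewriting each quadratic constraint as $p_n(x_n + \Delta_n) + \delta_{i,n} \le p_i(x_i + \Delta_i)$ makes its left-hand side a constant once $p_n$ and $x_n$ are fixed; following the recipe above the statement, I would discard the constraint when that constant is non-positive (the right-hand side being non-negative then makes it a tautology) and otherwise instantiate $k = 2\sqrt{\hat p_n \Delta_n + \delta_{i,n}}$ and $k' = \Delta_i$ to recast it as a second-order cone constraint. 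Hence every sub-problem is an SOCP and can be solved in time polynomial in the input length $O(nK)$.

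Finally, I would combine the running-time and accuracy accounting exactly as in Theorem~\ref{FPTLemma}: the algorithm solves $O(1/\epsilon)$ SOCPs and returns the best of their solutions, so once the bit precision $K$ is treated as the fixed parameter---which bounds $1/\epsilon$, and hence the number of sub-problems---the total running time is polynomial, yielding the FPT guarantee; and by Lemma~\ref{targetlemma} the returned value is within an additive $\Theta(\epsilon)$ of the optimum of $PX_n$. Together these give the claimed FPT additive $\Theta(\epsilon)$-approximate algorithm.

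I expect the main obstacle to be bookkeeping rather than conceptual. I must ensure the case split in the SOCP reduction (tautological versus genuine second-order constraint) is applied consistently across all $n-1$ quadratic constraints and for every grid value of $p_n$, and that fixing $p_n = \hat p_n$ preserves feasibility of the decomposition into the $p_i^j$ variables, so that the SOCP is feasible exactly when the corresponding slice of $PX_n$ is. The genuinely delicate quantitative content---that snapping $p_n$ to the nearest grid point perturbs the objective by only $\Theta(\epsilon)$ and that $x_n = 0$ is without loss of optimality---lives in Lemma~\ref{targetlemma}, which I am taking as given.
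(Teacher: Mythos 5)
Your proposal is correct and follows essentially the same route as the paper's own argument: it invokes Lemma~\ref{targetlemma} to set $x_n = 0$ and discretize $p_n$, reduces each fixed-$p_n$ sub-problem to an SOCP via the identical case split (discarding the constraint when the constant left-hand side of $p_n(x_n+\Delta_n)+\delta_{i,n} \leq p_i(x_i+\Delta_i)$ is non-positive, otherwise instantiating $k = 2\sqrt{p_n(x_n+\Delta_n)+\delta_{i,n}}$ and $k' = \Delta_i$), and combines the $O(1/\epsilon)$ polynomial-time SOCP solves under fixed bit precision into the running time $O(S(n)/\epsilon)$, exactly as the paper does. The feasibility bookkeeping you flag is handled automatically since the sub-problems retain the $p_i^j$ variables and grid constraints as linear constraints, so no gap remains.
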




\section{Fully Polynomial Time Approximation}

Our goal in this section is to develop an FPTAS for problem $P_n$, under the condition that the set $C$ returned by $\mathsf{CONSTRAINT\_FIND}$  has polynomially many constraints.
Our algorithm builds on an earlier algorithm~\cite{bcdps2013audit} for the restricted auditing 
scenario with just one defender resource. Since we solve the defender's problem after extracting the constraints $C$, the variables
in our problem are just the $p_i$'s and $x$. 
$$
\begin{array}{ll}
\displaystyle\max_{p_{i},x} & p_{n} \Delta_{D,n} - ax \ ,\\
\mbox{subject to}& \forall i \neq n. \\
&  p_i (-x - \Delta_i) + p_n (x + \Delta_n) + \delta_{i,n}  \leq 0 \ ,  \\
& c \in C, \ \ \forall i.  \ 0 \leq p_i \leq 1 \ , \ \ \ 0 \leq x \leq 1 \ .
\end{array}
$$

\medskip

\noindent\textit{Property of optimal points.} We state the following property of some
optimal points $p_i^*$'s and $x^*$ of the optimization:
\begin{lemma} \label{optimalprop}
There exists optimal points $p_i^*$'s and $x^*$ such that if
$p_n^* (x^* + \Delta_n) + \delta_{j,n} \geq 0$ then $p_n^* (x^* + \Delta_n) + \delta_{j,n} = p_j^* (x^* + \Delta_j)$ (i.e., quadratic constraint is tight) else when
$p_n^* (x^* + \Delta_n) + \delta_{j,n} < 0$ then $p_j^* = 0 $.
\end{lemma}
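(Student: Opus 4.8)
The plan is to start from an arbitrary optimal solution $(p_1^*,\ldots,p_n^*,x^*)$ and then, holding $p_n^*$ and $x^*$ fixed, drive each remaining variable $p_j^*$ with $j\neq n$ down to the smallest value that is still feasible. The structural observation that makes this work is that the objective $p_n\Delta_{D,n}-ax$ depends only on $p_n$ and $x$, so any modification of the coordinates $p_j$ with $j\neq n$ leaves the objective value unchanged; it therefore suffices to exhibit a feasible point of the claimed form having the same $p_n$ and $x$.

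First I would rewrite the $j$-th quadratic constraint in the equivalent form $p_n(x+\Delta_n)+\delta_{j,n}\leq p_j(x+\Delta_j)$ and note that $x+\Delta_j\geq 0$, since $x\geq 0$ and $\Delta_j\geq 0$. Read as a constraint on $p_j$ alone (with $p_n,x$ fixed) this is a lower bound: when the left-hand side $p_n^*(x^*+\Delta_n)+\delta_{j,n}$ is negative the constraint is vacuous because the right-hand side is nonnegative, and when it is nonnegative it forces $p_j\geq (p_n^*(x^*+\Delta_n)+\delta_{j,n})/(x^*+\Delta_j)$. Accordingly I would set the new value $\hat p_j$ to $0$ in the first case and to exactly $(p_n^*(x^*+\Delta_n)+\delta_{j,n})/(x^*+\Delta_j)$ in the second; the degenerate case $x^*+\Delta_j=0$ only occurs when the left-hand side itself is $0$, and there $\hat p_j=0$ also makes the constraint tight.

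The two things to verify are feasibility and the absence of any coupling when the $p_j$ are lowered simultaneously. Each quadratic constraint involves only its own $p_i$ together with the unchanged $p_n$ and $x$, so assigning $\hat p_i$ as above satisfies constraint $i$ by construction, either with equality or vacuously, independently of the other coordinates, and no coupling can arise. For the box constraints I would observe that in every case $0\leq \hat p_j\leq p_j^*\leq 1$: this is immediate when $\hat p_j=0$, and in the other case it follows because $p_j^*$ already satisfied the same lower bound. The crucial point for the rest is that every constraint in $C$ is a packing inequality $\sum_{p\in P}p\leq k$; since $\hat p_j\leq p_j^*$ for all $j\neq n$ while $p_n$ is untouched, every such left-hand side only decreases, so all constraints in $C$ remain satisfied.

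By construction the modified point satisfies exactly the claimed dichotomy: whenever $p_n^*(x^*+\Delta_n)+\delta_{j,n}\geq 0$ the $j$-th quadratic constraint holds with equality, and whenever it is negative we have $\hat p_j=0$. Since the objective is unchanged the point is again optimal, which proves the lemma. I do not expect a genuine obstacle here; the only points needing care are confirming that the constraints in $C$ are purely packing constraints, so that lowering variables preserves feasibility, and dispatching the degenerate case $x^*+\Delta_j=0$.
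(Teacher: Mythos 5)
Your proposal is correct and takes essentially the same route as the paper's own proof: since the objective depends only on $p_n$ and $x$, each $p_j$ with $j \neq n$ is lowered to $\max\bigl\{0,\ (p_n^*(x^*+\Delta_n)+\delta_{j,n})/(x^*+\Delta_j)\bigr\}$, exactly the point the paper constructs (its appendix writes $\min$ where $\max$ is clearly intended). Your writeup is in fact somewhat more careful than the paper's terse ``decreasing $p_j$ is not restricted by any inequality other than the quadratic inequality,'' since you explicitly verify that the constraints in $C$ are packing inequalities preserved by lowering variables and you dispatch the degenerate case $x^*+\Delta_j=0$.
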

\begin{proof}[Proof Sketch]
The quadratic constraint can be written as $\frac{p_n (x + \Delta_n) + \delta_{j,n}}{x + \Delta_i} \leq p_i$
At the optimal point if $p_n^* (x^* + \Delta_n) + \delta_{j,n} \geq 0$ then if we have $\frac{p_n^* (x^* + \Delta_n) + \delta_{j,n}}{x + \Delta_i} < p_i^*$, we can always reduce $p_i^*$ without affecting the 
objective value till we get an equality. Also, for the case $p_n^* (x^* + \Delta_n) + \delta_{j,n} < 0$ we 
can reduce $p_i^*$ to $0$.
\end{proof}

We focus on finding one of the optimal points with the property stated above. Next, we sort $\delta_{i,n}$'s to get a sorted array 
$\delta_{(i),n}$ in ascending order. Then, we split the optimization
problem $P_n$ into sub-problems $EQ_{(j)}$, where in each problem $EQ_{(j)}$ it is assumed that 
$p_n, x$ lies between the hyperbolas $p_n (x + \Delta_n) + \delta_{(j),n}$ (open 
boundary) and $p_n (x + \Delta_n) + \delta_{(j+1),n}$ (closed boundary) in the plane 
spanned by $p_n, x$. Thus, in $EQ_{(j)}$, $p_n 
(x + \Delta_n) + \delta_{(j),n}$ is non-negative
for all $(k) > (j)$ and negative otherwise. Using the property
of the optimal point above, for the non-negative case we obtain equalities for the quadratic constraints and for the negative case we claim that for all $(k) \leq (j)$ we can set $p_{(k)} = 0$.
The optimal value for $P_n$ can be found
by solving each $EQ_{(j)}$ and taking the best solution from these sub-problems.

The optimization problem for $EQ_{(j)}$ is as follows:
 $$
\begin{array}{ll}
\displaystyle\max_{p_{i},x} & p_{n}\Delta_{D,n} - ax \ ,\\
\mbox{subject to}& \forall (i) > (j). \ 0 \leq \frac{p_n (x + \Delta_n) + \delta_{(i),n}}{x + \Delta_{(i)}} = p_{(i)} \leq 1  \\
& p_n (x + \Delta_n) + \delta_{(j),n} < 0 \\
& p_n (x + \Delta_n) + \delta_{(j+1),n} \geq 0 \\
& c \in C ,\ \forall (i) \leq (j).  \ p_{(i)} = 0 \ , \ \ \ 0 \leq x \leq 1 \ .
\end{array}
$$
As no $p_i$ (except $p_n$) appears in the objective, and due to the equality constraints
on particular $p_i$'s, we can replace those $p_i$'s by a function of $p_n, x$. 
Other $p_i$'s are zero. Next, by a series of simple algebraic manipulations we obtain 
a two-variable optimization problem:
 $$
\begin{array}{llc}
\displaystyle\max_{p_{i},x} & p_{n}\Delta_{D,n} - ax \ ,&\\
\mbox{subject to}& \forall b \in \{1, \dots,  B\}. ~ p_n \leq f_b(x) & \\
& \ 0 \leq p_n \leq 1 \ , \ \ \ 0 \leq x \leq 1 \ ,
\end{array}
$$
where $B$ is the total number of constraints, which is
of the same order as $|C|$. The details of the algebraic steps are in the full version.

\medskip
\noindent\textit{Solving the sub-problem.} Our two final lemmas are not hard to prove. Their proofs appear in the full version.
 \begin{lemma} \label{fixedxpn}
 Problem $EQ_{(j)}$ can be solved efficiently for a fixed value $x$ or fixed value of $p_n$.
  \end{lemma}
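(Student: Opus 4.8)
The plan is to treat the two cases separately, in each case reducing to a single-variable problem whose special structure makes it easy. I would work with the reduced two-variable program
$$\max_{p_n,x}\ p_n\Delta_{D,n}-ax\quad\text{s.t.}\quad p_n\le f_b(x)\ (b\in\{1,\dots,B\}),\ 0\le p_n\le 1,\ 0\le x\le 1,$$
recalling from its derivation that each $f_b$ is a fixed rational function of $x$ — obtained by substituting the equalities $p_{(i)}=\tfrac{p_n(x+\Delta_n)+\delta_{(i),n}}{x+\Delta_{(i)}}$ into a box constraint $p_{(i)}\le 1$ or into a constraint $c\in C$ — and that $\Delta_{D,n}\ge 0$.

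For a fixed value $x=x_0$, the term $-ax_0$ is constant and the objective is affine in the single remaining variable $p_n$ with non-negative slope $\Delta_{D,n}$, so it is maximized by making $p_n$ as large as feasible. Every constraint other than $p_n\ge 0$ is an upper bound on $p_n$ (the $B$ bounds $p_n\le f_b(x_0)$ and the bound $p_n\le 1$), so the optimum is simply $\min\{1,\min_b f_b(x_0)\}$, with the instance feasible exactly when this value is non-negative. This needs only $B$ evaluations of the $f_b$ followed by a minimum, hence runs in polynomial time.

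For a fixed value $p_n=p$, the term $p\Delta_{D,n}$ is constant and the objective is affine in $x$ with slope $-a$, hence monotone; its optimum over any feasible set is attained at a boundary point of that set. I would clear denominators in each constraint $p\le f_b(x)$: multiplying by the product $\prod_i (x+\Delta_{(i)})$, which is strictly positive on $[0,1]$ since every $\Delta_{(i)}\ge 0$, turns it into an equivalent polynomial inequality $g_b(x)\ge 0$ in the single variable $x$. The feasible set $\{x\in[0,1]:\forall b,\ p\le f_b(x)\}$ is then a finite union of intervals whose endpoints lie among the real roots of the $g_b$ together with $0$ and $1$. I would isolate all such roots in $[0,1]$ (polynomial time in the degree and bit length of the $g_b$), sort the resulting breakpoints, and — using the monotonicity of the objective — scan them to return the extreme feasible $x$ and its objective value.

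The fixed-$x$ direction is immediate, so the only real work, and the main obstacle, is the fixed-$p_n$ direction: the constraints coming from $C$ produce genuinely rational $f_b$, so I must clear their denominators while tracking the (uniformly positive) signs on $[0,1]$ and then argue that the resulting one-variable polynomials have real roots that can be isolated efficiently. Once that bookkeeping is done, monotonicity of the objective collapses the task to scanning a polynomially long list of candidate values of $x$.
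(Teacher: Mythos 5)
Your proposal is correct and takes essentially the same route as the paper's proof: the fixed-$x$ case is dispatched as immediate (the paper notes the problem becomes a linear program), and for fixed $p_n$ the paper likewise reduces to one-variable polynomial inequalities $f(x)\leq 0$, computes their roots (to additive precision $2^{-l}$ via Sch\"onhage's method), intersects the resulting intervals, and exploits monotonicity of the objective in $x$ (equivalently, minimizing $x$) to pick the extreme feasible value. Your denominator-clearing bookkeeping is exactly the detail the paper leaves implicit; the only cosmetic caveat is that $x+\Delta_{(i)}$ can vanish at $x=0$ when $\Delta_{(i)}=0$, so the multiplier is only strictly positive for $x>0$, but the cleared polynomial inequality is just the original quadratic constraint, so the equivalence survives this edge case.
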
 
   \begin{lemma} \label{boundary}
 The optimal point for $EQ_{j}$ cannot be an interior
point of the region defined by the constraints, i.e., at least one of the inequalities is tight for the optimal point.
  \end{lemma}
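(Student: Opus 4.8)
The plan is to use the fact that, in the reduced two-variable form, the objective $p_n \Delta_{D,n} - a x$ is \emph{linear} in $(p_n, x)$; a linear function with nonzero gradient never attains its maximum over a feasible region at an interior point, so the optimum must make at least one inequality tight. I would set this up as a proof by contradiction.

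First I would suppose that the optimal point $(p_n^*, x^*)$ is interior, meaning that every inequality of $EQ_{(j)}$ --- namely $p_n \leq f_b(x)$ for all $b \in \{1,\dots,B\}$, together with $0 \leq p_n \leq 1$ and $0 \leq x \leq 1$ --- holds \emph{strictly} at $(p_n^*, x^*)$. Since each $f_b$ is continuous (it is assembled from the hyperbola expressions by the preceding algebraic reduction) and all the inequalities are strict, a sufficiently small open ball around $(p_n^*, x^*)$ remains feasible; this is what lets me perturb the point freely in any direction.

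Next I would compute the gradient of the objective, the constant vector $(\Delta_{D,n}, -a)$, and move from $(p_n^*, x^*)$ a small step $\eta > 0$ along it, to $(p_n^* + \eta \Delta_{D,n},\, x^* - \eta a)$. For $\eta$ small enough this point is feasible by the previous step, and its objective value exceeds that of $(p_n^*, x^*)$ by exactly $\eta(\Delta_{D,n}^2 + a^2) > 0$, contradicting optimality. Hence no interior point is optimal, and at least one inequality must be tight, which is precisely the claim.

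The only delicate point --- and it is minor --- is ensuring the gradient $(\Delta_{D,n}, -a)$ is genuinely nonzero. This relies on the standing modeling assumption that the punishment cost $a$ is strictly positive; given $a > 0$ the conclusion holds irrespective of $\Delta_{D,n}$ (recall that only $\Delta_{D,n} \geq 0$ is guaranteed, so a nonzero component could not be assured from $\Delta_{D,n}$ alone). The hard part, such as it is, is thus merely to flag this positivity assumption; everything else is the routine observation that a linear program optimizes on the boundary of its feasible region.
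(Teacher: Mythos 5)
Your proof is correct and follows essentially the same route as the paper's: both argue by contradiction that an interior point admits a small feasible perturbation (guaranteed by strictness of the inequalities and continuity of the $f_b$) that strictly increases the linear objective. The only difference is cosmetic but slightly in your favor: the paper perturbs only $p_n$ upward, which implicitly requires $\Delta_{D,n} > 0$, whereas your move along the gradient $(\Delta_{D,n}, -a)$ also covers the degenerate case $\Delta_{D,n} = 0$ provided $a > 0$, an assumption you rightly flag.
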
 
  \begin{proof}[Proof Sketch]
  It is easy to see that if all constraints are non-tight at the optimal point, then $p_n$ can be increased by a small amount without violating any constraint and also increasing the objective value. Thus, some constraint
  must be tight.
  \end{proof}
\begin{algorithm}[h] \label{algmult1}
\DontPrintSemicolon
Solve the problem for $p_n= 0,1$ and $x = 0,1$\; 
Collect solutions $(p_n, x)$ from the above in $M$\;
\For{$b \leftarrow 1$ \KwTo $B$}{
	Replace $p_n = f_b(x)$ in the objective to get $F(x) = f_b(x)\Delta_{D,n} - ax$ \;
	Take the derivative to get $F'(x) = \frac{\partial F(x)}{\partial x}$\;
	$R \leftarrow \mathsf{ROOTS}(F'(x), 2^{-l}, (0,1))$\;
	$R' \leftarrow \mathsf{MAKEFEASIBLE}(R)$\;
	From $R'$ obtain set $M'$ of potential solutions $(p_n, x)$\;
	$M \leftarrow M \cup M'$\;
}
\For{$(b,b') \in \left\{(b,b') ~|~ b \in B, b \in B, b' > b \right\}$}{
	Equate $f_b(x) = f_{b'}(x)$ to get $F(x) = 0$ \;
	$R \leftarrow \mathsf{ROOTS}(F(x), 2^{-l}, (0,1))$\;
	$R' \leftarrow \mathsf{MAKEFEASIBLE}(R)$\;
	From $R$ obtain set $M'$ of potential solutions $(p_n, x)$\;
	$M \leftarrow M \cup M'$\;
}
$(p_n^*, x^*) \leftarrow \arg\max_{M} \{ p_n \Delta_{D,n} - ax \}$ \;
\Return $(p_n^*, x^*)$
\caption{$\mathsf{APX\_SOLVE}(l, EQ_{(j)})$}
\end{algorithm}

We are now ready to present the FPTAS, given as  
Algorithm~\ref{algmult1}. The algorithm first searches potential optimal points on the boundaries (in the first loop)
and then searches potential optimal points at the intersection of two boundaries (second loop). 
The roots are found to an additive approximation factor of $2^{-l}$ in time 
polynomial in the size of the problem representation and $l$~\cite{schonhage1982fundamental}. As shown in Blocki et al.~\shortcite{bcdps2013audit}, the case of roots lying outside the feasible region (due
to approximation) is taken care of by the function $\mathsf{MAKEFEASIBLE}$. The first 
loop iterates a maximum of $n$ times, and the second loop iterates a maximum of
$n \choose 2$ times.
Thus,
we have the following result. 
\begin{theorem}
The optimization problem $P_n$ can be solved with an additive approximation
factor of $\Theta(2^{-l})$ in time polynomial in the input size and $l$, i.e., our algorithm
to solve $P_n$ is an FPTAS.
\end{theorem}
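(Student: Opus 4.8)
The plan is to show that Algorithm~\ref{algmult1}, run once for each of the $n$ subproblems $EQ_{(1)},\ldots,EQ_{(n)}$, returns a feasible point whose objective is within $\Theta(2^{-l})$ of the optimum of $P_n$, and that the whole procedure runs in time polynomial in the input size and in $l$. First I would argue that it suffices to solve each $EQ_{(j)}$ to the desired accuracy and return the best candidate: by Lemma~\ref{optimalprop} together with the way the subproblems are carved out by consecutive hyperbolas, the optimum of $P_n$ is attained in some band, hence equals the maximum over $(j)$ of the optima of the $EQ_{(j)}$, and there are at most $n$ such bands. Since, after the reduction of Lemma~\ref{reduction}, we are operating under the standing assumption that $|C|$ (and therefore $B$) is polynomial, each $EQ_{(j)}$ is a two-variable program over $(p_n,x)\in[0,1]^2$ with $B$ constraints of the form $p_n \leq f_b(x)$, where each $f_b$ is a ratio of affine functions of $x$.

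The core of the correctness argument is to exhibit a finite candidate set guaranteed to contain the true optimum of $EQ_{(j)}$. By Lemma~\ref{boundary} the optimum is never an interior point, so at least one inequality is tight there, leaving three cases, each handled by a block of Algorithm~\ref{algmult1}: (i) a box constraint is tight, i.e.\ $p_n\in\{0,1\}$ or $x\in\{0,1\}$, which is checked directly via Lemma~\ref{fixedxpn}; (ii) exactly one constraint $p_n=f_b(x)$ is active, in which case substitution yields the single-variable objective $F(x)=f_b(x)\Delta_{D,n}-ax$, whose maximizer on $[0,1]$ is an endpoint or a zero of $F'(x)$; and (iii) two constraints are simultaneously active, giving the $x$ solving $f_b(x)=f_{b'}(x)$. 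Because each $f_b$ is rational in $x$, clearing denominators turns both $F'(x)=0$ and $f_b(x)-f_{b'}(x)=0$ into polynomial equations whose real roots in $(0,1)$ can be computed to additive accuracy $2^{-l}$ in time polynomial in the representation size and in $l$~\cite{schonhage1982fundamental}. Enumerating all $b$ in the first loop and all pairs $b,b'$ in the second loop therefore produces a candidate set that exactly contains the optimizer of $EQ_{(j)}$.

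I would then account for the approximation error, which has two sources: each root is located only to within $2^{-l}$, and a computed candidate may be slightly infeasible. The latter is repaired by $\mathsf{MAKEFEASIBLE}$ exactly as in Blocki et al.~\shortcite{bcdps2013audit}, perturbing a near-feasible point back into the feasible region by $O(2^{-l})$. For the former, since the objective is linear and each $f_b$ is Lipschitz on the relevant domain (its derivative being bounded in terms of the $K$-bit inputs), a $2^{-l}$ perturbation of the located root changes $p_n=f_b(x)$, and hence the objective, by only $\Theta(2^{-l})$; combining this with the exact enumeration above yields the claimed additive $\Theta(2^{-l})$ guarantee for each $EQ_{(j)}$, and thus for $P_n$. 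For the running time, there are $n$ subproblems, and within each the first loop performs $O(B)$ root-finding calls while the second performs $O(B^2)$, every call costing $\mathrm{poly}(nK,l)$; under the assumption $|C|=O(\mathrm{poly}(n))$ we have $B=O(\mathrm{poly}(n))$, so the total time is polynomial in the input size $O(nK)$ and in $l$, which is precisely an FPTAS.

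The main obstacle I anticipate is not the enumeration but the quantitative error analysis. One must verify carefully that the Lipschitz constants of $F$ and of the intersection equations $f_b-f_{b'}$ stay bounded by a polynomial in the input magnitudes, so that $2^{-l}$ root accuracy genuinely translates into $\Theta(2^{-l})$ objective error, and that $\mathsf{MAKEFEASIBLE}$ never degrades the objective by more than this amount. The delicate regime is where the denominator $x+\Delta_n$ of the $f_b$ becomes small---notably the degenerate case $\Delta_n=0$ with $x$ near $0$, precisely the situation excluded by the side condition of Theorem~\ref{FPTLemma}---since there the functions become near-vertical and the Lipschitz bound could blow up; the analysis must either rule this out or handle it separately.
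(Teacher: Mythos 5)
Your proposal follows essentially the same route as the paper: the same decomposition of $P_n$ into the band subproblems $EQ_{(j)}$ justified by Lemma~\ref{optimalprop}, the same use of Lemma~\ref{boundary} to restrict attention to boundary points, the same three-way candidate enumeration in Algorithm~\ref{algmult1} (box constraints via Lemma~\ref{fixedxpn}, single active constraint via roots of $F'(x)$, pairwise intersections $f_b = f_{b'}$) with $2^{-l}$-accurate root finding~\cite{schonhage1982fundamental} and $\mathsf{MAKEFEASIBLE}$ repair, and the same polynomial operation count under the standing assumption that $|C|$, hence $B$, is polynomial. Your closing caution about Lipschitz constants when $x+\Delta_n$ is small is a legitimate point the paper leaves implicit (it is the regime excluded by the side condition of Theorem~\ref{FPTLemma}), but it does not alter the argument's structure.
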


%

\section{Experimental Results}

In this section, we empirically demonstrate the speedup gains from our optimization transformation
for both audit games and security games. We obtained speedups of up to $3\times$
for audit game instances and over $100\times$ for associated security game instances.

Our experiments were run on a desktop with quad core 3.2 GHz processor and 6GB RAM.
Code was written in Matlab using the built-in large scale
interior point method implementation of linear programming. 
We implemented two FPT algorithms---with and without the optimization 
transformation. For both the algorithms, we used
the same problem inputs in which utilities were generated 
randomly from the range $[0, 1]$, $a$ was fixed to $0.01$, $x$ was 
discretized with interval size of $0.005$.


\renewcommand{\arraystretch}{1.0}
\begin{table}[t]
\begin{tabular}{l r r r r}
\toprule
& &  \#Resource & \multicolumn{2}{c}{Time (min)} \\
\cmidrule(r){4- 5}
Game & \#Target  & (GroupSize)  & T & NT~ \\
\midrule
Audit &   100   & 10 (2)    & 12   & 15~  \\
 Audit &     200    & 100 (10)       & 81    &  234~ \\
Security &   3,000     & 500 (10)    & 28   &   8,500\footnotemark[1] \\
Security &    5,000   & 1,000  (20)   & 119  &  110,000\footnotemark[1] \\
\bottomrule
\end{tabular}
\caption{FPT algorithm running times (in min) with our optimization transformation (T) and no transformation (NT).} 
 \label{experimentresult}
\end{table}
\footnotetext[1]{These data points are extrapolations from runs that were allowed to run 12 hours. The extrapolation was based
on the number of optimization problems solved vs the total number of optimization
problems (3000/5000 total problems for 3000/5000 targets).} 


%

We ran experiments for audit games and security games with 
varying number of targets and resources. The resources were divided into equal sized groups such that the targets any group of resources could inspect was disjoint from the target set for any other group. 
Table~\ref{experimentresult} 
shows our results with the varying number of targets, resources and size of the group of
targets. The results are an average over 5 runs of the optimization with random utilities in each run.
Audit games take more time to solve than corresponding security games with 
a similar number of targets as we run the
corresponding LP optimization 200 times (the discrete interval for $x$ is 0.005). Hence we solve for larger security game instances. 

Our implementations did not optimize for speed using heuristics because our goal was to only test the speedup gain from our optimization transformation. Thus, we do not scale up
to the number of targets that heuristic approaches such as {\sc ORIGAMI}~\cite{Kiekintveld1558108} achieve (1,000,000 targets). {\sc ORIGAMI}
works by iteratively building the optimal attack set, i.e., the set of targets that the adversary finds most attractive at the optimal solution point.
However, {\sc ORIGAMI} considers only coverage probabilities (marginals). Thus, its 
output may not be implementable with scheduling constraints on resources. In contrast, our
approach guarantees that the coverage probabilities output are implementable. Wedding the two 
approaches to obtain scalable and provably implementable audit and security game solutions 
remains an interesting direction for future work.  
\cut{
Also, our
results (Lemma~\ref{optimalprop}) hints towards a principled way of constructing
the attack set for both audit and security games. Specifically, Lemma~\ref{optimalprop} and our constraints on the coverage probabilities (set $C$) provide a way to augment ORIGAMI to rigorously solve large scale security and audit games with singleton schedules.
}


\newpage
\bibliography{References} 

\begin{thebibliography}{}

\bibitem[\protect\citeauthoryear{Birkhoff}{1946}]{MR0020547}
Birkhoff, G.
\newblock 1946.
\newblock Three observations on linear algebra.
\newblock {\em Univ. Nac. Tucum\'an. Revista A.} 5:147--151.

\bibitem[\protect\citeauthoryear{Blocki \bgroup et al\mbox.\egroup
  }{2013}]{bcdps2013audit}
Blocki, J.; Christin, N.; Datta, A.; Procaccia, A.~D.; and Sinha, A.
\newblock 2013.
\newblock Audit games.
\newblock In {\em Proceedings of the 23rd International Joint Conference on
  Artificial Intelligence (IJCAI)},  41--47.

\bibitem[\protect\citeauthoryear{Blum, Haghtalab, and Procaccia}{2014}]{BHP14}
Blum, A.; Haghtalab, N.; and Procaccia, A.~D.
\newblock 2014.
\newblock Learning optimal commitment to overcome insecurity.
\newblock In {\em Proceedings of the 28th Annual Conference on Neural
  Information Processing Systems (NIPS)}.
\newblock Forthcoming.

\bibitem[\protect\citeauthoryear{Boyd and Vandenberghe}{2004}]{boyd2004convex}
Boyd, S.~P., and Vandenberghe, L.
\newblock 2004.
\newblock {\em Convex optimization}.
\newblock Cambridge University Press.

\bibitem[\protect\citeauthoryear{Conitzer and Sandholm}{2006}]{ConitzerS06}
Conitzer, V., and Sandholm, T.
\newblock 2006.
\newblock Computing the optimal strategy to commit to.
\newblock In {\em Proceedings of the 7th ACM Conference on Electronic
  Commerce},  82--90.

\bibitem[\protect\citeauthoryear{Dulmage and Halperin}{1955}]{MR0078627}
Dulmage, L., and Halperin, I.
\newblock 1955.
\newblock On a theorem of {F}robenius-{K}\"onig and {J}. von {N}eumann's game
  of hide and seek.
\newblock {\em Trans. Roy. Soc. Canada. Sect. III. (3)} 49:23--29.

\bibitem[\protect\citeauthoryear{{Fairwarning}}{2011}]{fairwarning}
{Fairwarning}.
\newblock 2011.
\newblock {Industry Best Practices for Patient Privacy in Electronic Health
  Records}.

\bibitem[\protect\citeauthoryear{Gallier}{2008}]{gallier2008notes}
Gallier, J.
\newblock 2008.
\newblock Notes on convex sets, polytopes, polyhedra, combinatorial topology,
  voronoi diagrams and delaunay triangulations.
\newblock {\em arXiv:0805.0292}.

\bibitem[\protect\citeauthoryear{Kiekintveld \bgroup et al\mbox.\egroup
  }{2009}]{Kiekintveld1558108}
Kiekintveld, C.; Jain, M.; Tsai, J.; Pita, J.; Ord\'{o}\~{n}ez, F.; and Tambe,
  M.
\newblock 2009.
\newblock Computing optimal randomized resource allocations for massive
  security games.
\newblock In {\em Proceedings of the 8th International Conference on Autonomous
  Agents and Multiagent Systems (AAMAS)},  689--696.

\bibitem[\protect\citeauthoryear{Kiekintveld, Islam, and
  Kreinovich}{2013}]{Kiekintveld:2013:SGI:2484920.2484959}
Kiekintveld, C.; Islam, T.; and Kreinovich, V.
\newblock 2013.
\newblock Security games with interval uncertainty.
\newblock In {\em Proceedings of the 12th International Conference on
  Autonomous Agents and Multi-agent Systems (AAMAS)},  231--238.

\bibitem[\protect\citeauthoryear{Korzhyk, Conitzer, and
  Parr}{2010}]{KorzhykCP10}
Korzhyk, D.; Conitzer, V.; and Parr, R.
\newblock 2010.
\newblock Complexity of computing optimal {S}tackelberg strategies in security
  resource allocation games.
\newblock In {\em Proceedings of the 24th {AAAI} conference on Artificial
  Intelligence (AAAI)},  805--810.

\bibitem[\protect\citeauthoryear{Nguyen, Jiang, and
  Tambe}{2014}]{Nguyen:2014:SCU:2615731.2615784}
Nguyen, T.~H.; Jiang, A.~X.; and Tambe, M.
\newblock 2014.
\newblock Stop the compartmentalization: Unified robust algorithms for handling
  uncertainties in security games.
\newblock In {\em Proceedings of the 13th International Conference on
  Autonomous Agents and Multi-agent Systems (AAMAS)},  317--324.

\bibitem[\protect\citeauthoryear{Paruchuri \bgroup et al\mbox.\egroup
  }{2009}]{paruchuri2009coordinating}
Paruchuri, P.; Pearce, J.~P.; Marecki, J.; Tambe, M.; Ord{\'o}nez, F.; and
  Kraus, S.
\newblock 2009.
\newblock Coordinating randomized policies for increasing security of agent
  systems.
\newblock {\em Information Technology and Management} 10(1):67--79.

\bibitem[\protect\citeauthoryear{{Ponemon
  Institute}}{2011}]{Ponemon2011benchmark}
{Ponemon Institute}.
\newblock 2011.
\newblock {Second Annual Benchmark Study on Patient Privacy and Data Security}.

\bibitem[\protect\citeauthoryear{{Ponemon Institute}}{2012}]{Ponemon2011}
{Ponemon Institute}.
\newblock 2012.
\newblock {2011 Cost of Data Breach Study: United States}.

\bibitem[\protect\citeauthoryear{Sch{\"o}nhage}{1982}]{schonhage1982fundamental}
Sch{\"o}nhage, A.
\newblock 1982.
\newblock The fundamental theorem of algebra in terms of computational
  complexity.
\newblock Technical report, University of T{\"u}bingen.

\bibitem[\protect\citeauthoryear{Tambe}{2011}]{tambe}
Tambe, M.
\newblock 2011.
\newblock {\em Security and Game Theory: Algorithms, Deployed Systems, Lessons
  Learned}.
\newblock Cambridge University Press.

\end{thebibliography}
\bibliographystyle{aaai} 

\appendix
\section*{Appendix}

\section{Birkhoff-von Neumann}
 For the sake of completeness, we state the 
Birkhoff-von Neumann theorem from Korzhyk et al.~\cite{KorzhykCP10}, that is used to 
decompose the probability matrix into pure actions efficiently.

\textit{
(Birkhoff-von Neumann~\cite{MR0020547}). Consider
an $m \times n$ matrix $M$ with real numbers $a_{ij} \in [0, 1]$,
such that for each $1 \leq i \leq m$, $\sum^n_{j=1} a_{ij} \leq 1$, and for
each $1 \leq j \leq n$, $\sum^m_{i=1} a_{ij} \leq 1$. Then, there exist matrices $M_1,
M_2, \ldots , M_q$, and weights $w_1,w_2, \ldots ,w_q \in (0, 1]$,
such that:
\begin{enumerate}
\item $\sum^q_{k=1} w_k = 1$
\item $\sum^q_{k=1} w_k M_k = M$
\item for each $1 \leq k \leq q$, the elements of $M_k$ are $a^k_{ij} \in \{0, 1\}$
\item for each $1 \leq k \leq q$, we have: for each $1 \leq i \leq m$,
$\sum^n_{j=1} a^k_{ij} \leq 1$, and for each $1 \leq j \leq n$, $\sum^m_{i=1} a^k_{ij} \leq 1$.
\end{enumerate}
Moreover, $q$ is $O((m + n)^2)$, and the $M_k$ and $w_k$ can be
found in $O((m + n)^{4.5})$ time using Dulmage-Halperin algorithm~\cite{MR0078627}.
}

Clearly, our variables $p_{i}^j$'s can be considered as the matrix $M$ in the result
above, and hence can be decomposed into pure actions efficiently.

\section{Obtaining the two variable optimization}
The optimization problem for $EQ_{(j)}$ is as follows:
 $$
\begin{array}{llc}
\displaystyle\max_{p_{i},x} & p_{n}\Delta_{D,n} - ax \ ,&\\
\mbox{subject to}& \forall (i) > (j). \ 0 \leq \frac{p_n (x + \Delta_n) + \delta_{(i),n}}{x + \Delta_{(i)}} = p_{(i)} \leq 1 & \\
& p_n (x + \Delta_n) + \delta_{(j),n} < 0 &\\
& p_n (x + \Delta_n) + \delta_{(j+1),n} \geq 0 &\\
& c \in C ,\ \forall (i) \leq (j).  \ p_{(i)} = 0 \ , \ \ \ 0 \leq x \leq 1 \ .
\end{array}
$$
As no $p_i$ (except $p_n$) shows up in the objective, and due to the equality constraints
on particular $p_i$'s, we can replace those $p_i$'s by a function of $p_n, x$. 
Other $p_i$'s are zero.
Denote by $c(p_n, x)$ the inequality
obtained after substituting $p_i$ with the function of $p_n, x$ (or zero) in the constraint
$c \in C$. Thus, we get the 
following two variable optimization problem
 $$
\begin{array}{llc}
\displaystyle\max_{p_{i},x} & p_{n}\Delta_{D,n} - ax \ ,&\\
\mbox{subject to}& \forall (i) > (j). \ \frac{p_n (x + \Delta_n) + \delta_{(i),n}}{x + \Delta_{(i)}} \leq 1 & \\
& p_n (x + \Delta_n) + \delta_{(j),n} < 0 &\\
& p_n (x + \Delta_n) + \delta_{(j+1),n} \geq 0 &\\
&  c(p_n, x) \in C , \ 0 \leq p_n \leq 1 \ , \ \ \ 0 \leq x \leq 1 \ .
\end{array}
$$
Observe that we removed the $0 \leq$ condition in the first set of constraints because that is 
implied by the next two constraints. Next, note that any constraint (indexed by $b$) with two 
variables $p_n, x$ can be expressed as $p_n \leq f_b(x)$ (closed boundary) 
for constraint-specific ratio of polynomials $f_b(x)$ or by 
$p_n < \frac{-\delta_{(j),n}}{x + \Delta_n}$ (open boundary). However, we close the
open boundary, i.e., $p_n \leq \frac{-\delta_{(j),n}}{x + \Delta_n}$, and solve the 
problem, returning an infeasible solution in case the solution is on the boundary
$p_n = \frac{-\delta_{(j),n}}{x + \Delta_n}$. This is justified by the fact that the curve
$p_n = \frac{-\delta_{(j),n}}{x + \Delta_n}$ is included in the other optimization 
problem $EQ_{(j-1)}$, and would be output by that sub-problem if it indeed is the 
global maximizer. Thus, we represent the optimization problem as
 $$
\begin{array}{llc}
\displaystyle\max_{p_{i},x} & p_{n}\Delta_{D,n} - ax \ ,&\\
\mbox{subject to}& \forall b \in \{1, \dots,  B\}. ~ p_n \leq f_b(x) & \\
& \ 0 \leq p_n \leq 1 \ , \ \ \ 0 \leq x \leq 1 \ ,
\end{array}
$$
where $B$ is the total number of constraints described as above. 

\section{Target-Specific Punishments} \label{sec:TargetSpecificPunishment}
In this section we extend our FPT result to target-specific punishments. While
the formulation of the extended problem is not that different, solving it becomes significantly more challenging. 

In more detail, we augment the program $P_n$ by using individual 
punishment levels $x_1, \ldots, x_n$, instead of using the same punishment $x$ for 
each target. The new optimization problem $PX_{n}$ is
$$
\begin{array}{llc}
\displaystyle\max_{p_{i},x} & p_{n} \Delta_{D,n} - \sum_{j \in \{1, \ldots, n\}} a_j x_j \ ,&\\
\mbox{s.t.}& \forall i \neq n. \ p_i (-x_i - \Delta_i) + p_n (x_n + \Delta_n) + \delta_{i,n}  \leq 0 & \\
& \forall j. \ 0 \leq \sum_{i=1}^n p_i^j \leq 1 \ , \forall i. \ 0 \leq p_i = \sum_{j=1}^k p_i^j \leq 1 \ , &\\
& \ \forall (j,i). \ p_i^j \geq  0 \ , \forall (j,i) \in R. \  p_{i}^j = 0 \ , \\
& 0 \leq x \leq 1 \ . & 
\end{array}
$$
Note that the defender's penalty term is now a linear combination of the target-specific punishment level.

The na\"ive way to approach the above problem is to discretize each of the variables $x_1, \ldots, x_n$
and solve the resulting sub-problems, which are linear programs. However, such a discretization of size $\epsilon$,
even if yielding a $\Theta(\epsilon)$ additive approximation, will run in time $O((1/\epsilon)^n)$, which is not polynomial for constant $\epsilon$. 

Nevertheless, we show that 
it is possible to design an algorithm that runs in time polynomial in $\epsilon$ and yields
a $\Theta(\epsilon)$ additive approximation, by discretizing only 
$p_n$, and casting the resulting sub-problems as second-order cone programs 
(SOCP), which can be solved in polynomial time~\cite{boyd2004convex}. Letting $S(n)$ denote the (polynomial) running time of the SOCP, we 
obtain a running time of $O(S(n)/\epsilon)$, given fixed bit precision.

We first present the following intuitive result:
\medskip

\textit{Restatement of Lemma~\ref{targetlemma}.
At the optimal point for $PX_n$, $x_n$ is always 0. Further, considering discrete values of $p_n$ with interval size $\epsilon$ and solving the resulting sub-problems exactly yields a $\Theta(\epsilon)$ approximation.
}

\medskip
The proof is in the Missing Proofs section of the appendix.

Next, we show that for fixed values of $x_n$ and $p_n$ the optimization problem  $PX_{n}$  reduces to a second-order cone program. We first describe a general SOCP problem:
$$
\begin{array}{llc}
\displaystyle\max_{y} & f^T y \ ,&\\
\mbox{subject to}& \forall i \in \{1, \ldots, m\}.~||A_iy + b_i||_2 \leq c_i^T y + d_i & \\
& Fy = g \ .
\end{array}
$$
where the optimization variable is $y \in \mathbb{R}^n$, and all constants are of 
appropriate dimensions. In particular, we observe that the constraint  
$$
 k^2/4 \leq y_i(y_j + k')
 $$
 can always be written as a second order inequality $||Ay + b||_2 \leq c^T y + d  $ by selecting $A$ and $b$ such that $Ay + b = [k~~(y_i - y_j - k')]^T$
 and $c,d$ such that $c^T y + d = y_i + y_j + k$.

Our problem can be cast as a SOCP by rewriting the quadratic constraints
as
$$
  p_n (x_n + \Delta_n) + \delta_{i,n}  \leq p_i (x_i + \Delta_i)
$$
Using our approach (discretizing $p_n$, $x_n = 0$) the LHS of the above inequality is a constant. If the
constant is negative we can simply throw out the constraint --- it is a tautology since the RHS is always positive. If
the constant is positive, we rewrite the constraint as a second-order constraint as 
described above (e.g., set $k = 2\sqrt{  p_n (x_n + \Delta_n) + \delta_{i,n} }$ and set $k\rq{} = \Delta_i$). The rest of the constraints are linear, and can be rewritten in
equality form by introducing slack variables. Thus, the problem for each fixed value of
$p_n, x_n$ is an SOCP, and can be solved efficiently. Putting everything together, we get the main Theorem~\ref{tsp} of the target-specific punishment section in the paper.

\section{Missing Proofs} \label{AppC}

\begin{proof}[Proof of Theorem~\ref{FPTLemma}]
%

The quadratic constraint can be rewritten as 
$$
x(p_n - p_i) - p_i \Delta_i + p_n \Delta_n + \delta_{i,n} \leq 0
$$

Suppose we have an optimal point $x^o$, $p_i^o$'s. First, we note that if $x^o = 1$ (resp. $x^o=0$) then our algorithm will find the optimal solution exactly because $x = 1$ (resp. $x=0$) is one of the discrete values of $x$ considered by our algorithm. In the rest of the proof we assume that $0 < x^o < 1$. Let $\epsilon\rq{} > 0$ denote the smallest value s.t. $x^o+\epsilon\rq{}$ is one of the discrete values of $x$ considered by our algorithm (e.g. $x^0+\epsilon\rq{}$ $\in \{\epsilon i ~\vline~i \in \mathbb{N}\} \cup \{1\}$). We let 
\[ \tau = \max\left\{0, \max_{i \leq n} \frac{\epsilon\rq{}\left(p_n^o-p_i^o \right)}{x^o + \epsilon\rq{}+\Delta_n} \right\} \ ,\]
and we consider two cases. \\
{\bf Case 1:} $p_n^o \geq \tau $. In this case we set $\hat{x} = x^o+\epsilon\rq{}$, $\hat{p}_n = p_n^o-\tau$ and $\hat{p}_i = p_i^o$ for each $i \neq n$. We first show that this is a valid solution. For each $i \neq n$ we have
\begin{eqnarray*}
& & \hat{x}(\hat{p}_n -\hat{ p}_i) - \hat{p}_i \Delta_i + \hat{p}_n \Delta_n + \delta_{i,n} \\
 &=& (x^o+\epsilon\rq{})(p_n^o -\tau- p_i^o) - p_i^o \Delta_i + (p_n^o-\tau) \Delta_n + \delta_{i,n} \\
&=& \left( x^o(p_n^o - p_i^o) - p_i^o \Delta_i + p_n^o \Delta_n + \delta_{i,n} \right)  \\ 
& &+ \left(\epsilon\rq{}(p_n^o-p_i^o)-\tau\epsilon\rq{} - \tau \Delta_n \right)  - x^o \tau \\
&\leq& \epsilon\rq{}(p_n^o-p_i^o) - \tau (x^o + \Delta_n+\epsilon\rq{}) \\
&=&  \epsilon\rq{}(p_n^o-p_i^o) \\
&\ \ \ \ \ -&  (x^o + \Delta_n+\epsilon\rq{})\max\left\{0, \max_{k \leq n} \frac{\epsilon\rq{}\left(p_n^o-p_k^o \right)}{ x^o + \Delta_n+\epsilon\rq{}} \right\}  \\
&\leq& \epsilon\rq{}(p_n^o-p_i^o) - \max_{k \leq n} \epsilon\rq{}\left(p_n^o-p_k^o \right) \\
&\leq& 0 \ .
\end{eqnarray*}
Note that the loss in utility for the defender is  $\Delta_{D,n}\tau + a \epsilon'$ which is upper bounded by 
\[ \epsilon \left( \max_{i \leq n} \frac{\left|p_n^o-p_i^o \right|}{x^o + \Delta_n} +a  \right) \ . \]
That is  $\epsilon b$ for a constant $b$ when either $x^o$ is greater than a constant or $\Delta_n \neq 0$. Observe that due to the fixed bit precision assumption $\Delta_n \neq 0$ implies $\Delta_n$ is greater than a constant.  We remark that the solution returned by our algorithm will be at least as good as the solution given by $\hat{x}$ and $\hat{p}_i$\rq{}s because $\hat{x}$ is one of the discrete values of $x$ considered. Thus, the defender\rq{}s utility in the solution given by our algorithm is $\theta(\epsilon)$-close to the defenders utility in the optimal solution. \\

{\bf Case 2:} $p_n^o < \tau$. In this case we set $\hat{p_n} = 0$, $\hat{x} = x^o + \epsilon\rq{}$ and $\hat{p_i} = p_i^o$ for each $i \neq 0$. We first show that this is a valid solution. For each $i \neq n$ we have 
\begin{eqnarray*}
& & \hat{x}(\hat{p}_n -\hat{ p}_i) - \hat{p}_i \Delta_i + \hat{p}_n \Delta_n + \delta_{i,n} \\
 &=& (x^o+\epsilon\rq{})(p_n^o-p_n^o- p_i^o) - p_i^o \Delta_i + (p_n^o-p_n^o) \Delta_n + \delta_{i,n} \\
&=& \left( x^o(p_n^o - p_i^o) - p_i^o \Delta_i + p_n^o \Delta_n + \delta_{i,n} \right)  \\ 
& &+ \epsilon\rq{}(-p_i^o) - x p_n^o-p_n^o\Delta_n \\
&\leq& 0 \ .
\end{eqnarray*}
Note that the loss in utility for the defender is at most  $\Delta_{D,n}\tau + a \epsilon'$ which is upper bounded by 
\[ \epsilon \left( \max_{i \leq n} \frac{\left|p_n^o-p_i^o \right|}{x^o + \Delta_n} +a  \right) \ . \]
That is  $\epsilon b$ for a constant $b$.   We  again remark that the solution returned by our algorithm will be at least as good as the solution given by $\hat{x}$ and $\hat{p}_i$\rq{}s because $\hat{x}$ is one of the discrete values of $x$ considered. Thus, the defender\rq{}s utility in the solution given by our algorithm is $\theta(\epsilon)$-close to the defenders utility in the optimal solution. \\
\end{proof}

\begin{proof}[Proof of Lemma~\ref{reduction}] 
Since the optimization objective depends on the variables $p_i$'s only, and the quadratic constraints are same in both problems we just need to show that 
the regions spanned by the variables $p_i$'s, as specified by the linear constraints, are the same in both problems.

First, let $p_i$'s, $p_i^j$'s belong to region given by the grid constraints. Then, by the 
definition of $C$, for any $c_{ML}$ we know $M$ is audited only by $L$. Thus, for $i \in
\{t_1, \ldots, t_{|M|}\}$ the variables $p_i^j$ are non-zero only when $j \in \{s_1, \ldots, s_{|L|}\}$. Therefore, 
$$\sum_{i \in
\{t_1, \ldots, t_{|M|}\}} p_i  \leq \sum_{i \in
\{t_1, \ldots, t_{|M|}\}} \sum_{j \in \{s_1, \ldots, s_{|L|}\}} p_i^j \leq |L|$$
Hence, $p_i$'s satisfies all constraints in $C \cup \{0 \leq p_i \leq 1\}$, and therefore
$p_i$'s belong to region given by $C \cup \{0 \leq p_i \leq 1\}$.

Next, let $p_i$'s belong to region given by $C \cup \{0 \leq p_i \leq 1\}$. We first show that the
extreme points of the convex polytope given by $C \cup \{0 \leq p_i \leq 1\}$ sets the 
variables $p_1, \ldots, p_n$ to either $0$ or $1$. 

\textbf{Extreme points 0/1}

We prove this result with additional (redundant) constraints in $C$. These redundant constraints are the ones that were dropped due to $|L| \geq |M|$. As these constraints are redundant, the polytope with or without them is same, and so are the extreme points. Thus, for this part (extreme points) we assume $C$ includes these redundant constraints. 

We will do this by induction on the size of the
restricted set $R$. The base case is when $|R| = 0$, then the only constraint in $C$ is 
$\sum_i p_i \leq k$, i.e., $A_C$ is $\vec{1}$. It can be checked directly the extreme points have $k$ ones and other zeros.

Next assume the result holds for all $R$ with $|R| = w$. Consider a restriction $R$ with $|R| = w+1$. Choose any particular restriction, say $(j,i) \in R$. Suppose when this restriction does not exist, we have
the restriction $R' = R \backslash (j,i)$ of size $w$ and by induction hypothesis with $R'$ the extreme points are 0/1. We proceed to do the induction by checking how the constraint set $C$ and $C'$ differ.. We divide the proof into two cases. 

 \textbf{Case 1} First, suppose with restriction $R'$,
$t_i$ could be inspected only by $k_j$, then with $R$, $t_i$ is not inspected at all. Recall that any constraint $c_{M,L}$ denotes a set of resources $L$ and targets $M$ such that the targets are inspected only by resources. We construct the set of constraints $C$ and $C'$ by iterating through all subsets of resources. If for any subset of resources $L$, $k_j \notin L$ then the set of targets inspected only by $L$ does not include $t_i$ for both $R$ and $R'$ and is the same set for both $R$ and $R'$. Thus, 
$c_{M,L} \in C$ and $c_{M,L} \in C'$ and $p_i$ does not show up in these constraints. On the other hand, if $k_j \in L$, then for $R$ we have $t_i \notin M$ but for $R'$ we have $M' = M \cup t_i$. Thus, for such a subset $L$ we know that
\begin{itemize}
\item (\textsf{differ1}) $C$ and $C'$ include these constraint with the difference that $c_{M',L}$ has the additional variable $p_i$ on the LHS.
\end{itemize}
Thus, the constraints $C$ are formed from constraints $C'$ by setting $p_i = 0$. But, setting 
$p_i=0$ is the intersection of $p_i=0$ and the polytope given by $C'$. $p_i = 0$ is a $k-1$-face of the polytope given by $C'$. Thus, the extreme points of this intersection
polytope must be a subset of extreme points of polytope given by $C'$ and hence integral.

 \textbf{Case 2} In the second case with restriction $R'$,
$t_i$ is inspected by at least one resource other than $k_j$, then with restriction $R$, $t_i$ is still inspectable.
Let the set of resources that can inspect $t_i$ with restriction $R$ be $K_i$. We construct the set of constraints $C$ and $C'$ by iterating through all subsets of resources. If for any subset of resources $L$, $k_j \notin L$ then there are two cases possible:
\begin{itemize}
\item $K_i \subseteq L$ which implies $t_i \in M$ and $t_i \notin M'$ and $M = M' \cup t_i$. For this scenario we know that
\begin{itemize}
\item (\textsf{differ2}) $C'$ and $C$ both include these constraint with the difference that $c_{M,L}$ has the additional variable $p_i$ on the LHS.
\end{itemize}
\item $K_i \nsubseteq L$ which implies $t_i \notin M$ and $t_i \notin M'$, i.e., the set of targets inspected only by $L$ is same for both $R$ and $R'$
\end{itemize}
On the other hand, if $k_j \in L$, then there are two cases possible: 
\begin{itemize}
\item $K_i \subseteq L$ which implies $t_i \in M$ and $t_i \in M'$, i.e., the set of targets inspected only by $L$ is same for both $R$ and $R'$.
\item $K_i \nsubseteq L$ which implies $t_i \notin M$ and $t_i \notin M'$, i.e., the set of targets inspected only by $L$ is same for both $R$ and $R'$
\end{itemize}

For the cases where the set of targets inspected is same for both $R$ and $R'$, as argued earlier,
$c_{M,L} \in C$ and $c_{M,L} \in C'$.  Thus, the only scenario to reason about is \textsf{differ2}.
We do a step-by-step proof, by modifying constraints in $C'$ one by one to obtain the constraints $C$, in the
process showing for each step that the extreme points are 0/1. Thus, at each step we modify the polytope with 0/1 extreme points (say $R_{l}$) given by $C_l$ to obtain the polytope ($R_{l+1}$) given by $C_{l+1}$ by modifying one constraint $c$ that did
not have $p_i$ on the LHS to one $c_{+p_i}$ that has $p_i$ on the LHS. Clearly, $R_{l+1} \subseteq R_l$. First, we take care of some trivial cases.
If the constraints $c_{+p_i}$ is redundant (then $c_i$ must also be redundant) then it does not contribute to extreme points. Thus, the relevant case to consider is when $c_{+p_i}$ is not redundant. 

Now, we need to only consider those extreme points that lie on the constraint $c_{+p_i}$, as the other extreme points for $R_{l+1}$ will 
be same as for $R_{l}$ and hence integral. Consider the extreme point $p^*$ that lies on the constraint
$c_{+p_i}$. Wlog, let the variables in $c_{+p_i}$ be $p_1, \ldots, p_i$. We must have $\sum_1^i p_j = |L|$. Now, from the polytope $R_l$ remove all extreme points that have $\sum_1^{i} p_j = |L| + 1$, and consider the convex hull $R^-_l$ of remaining extreme points. Note that all extreme points of $R_l^-$ are in 
$R_{l+1}$, thus $R_l^- \subseteq R_{l+1}$. There could be two cases: one if there are no extreme points with $\sum_1^{i} p_j = |L| + 1$ in $R_l$, then $R_l = R^-_l \subseteq R_{l+1}$, and since $R_{l+1} \subseteq R_l$ we get $R_l = R_{l+1}$. Thus, all extreme points of $R_{l+1}$ will be $0/1$. 

The second (more interesting case) is when there are extreme points in $R_l$ with $\sum_1^{i} p_j = |L| + 1$ that 
get removed in $R^-_l$. Since $p^*$ is a point in $R_l$ it can be written as convex combination of extreme points. If all these extreme points satisfy  $\sum_1^{i} p_j \leq |L|$, then $p^*$ can be written as convex combination of extreme points of $R^-_l$ and since $R_l^- \subseteq R_{l+1}$, $p^*$ cannot be a extreme point. Or else, if one of the extreme points (in the convex combination) $P$ has $\sum_1^{i} p_j = |L| + 1$ with $p_s = 1$ ($1 \leq s \leq i$) then we can write $P$ as sum of two points within $R_l^-$: one which is same as $P$ except $p_s=0$, and other in which only $p_s = 1$ and other
components are zero. It is easy to check that these points are in $R_l^-$, since the inequalities allow for 
reducing any $p_j$ and clearly  $\sum_1^{i} p_j \leq |L|$. In such a manner, we can ensure
that all extreme points involved in the convex combination have $\sum_1^{i} p_j \leq |L|$, which reduces to the previous case.

\textbf{Extreme points are pure strategies}

Next, it is easy to see that for given $p_i$'s and $p_i^{*}$'s, with corresponding feasible $p_i^j$'s and $p_i^{*j}$'s, any convex combination $p^{@}_{i}$'s of $p_i$'s and $p^*_i$'s has a 
feasible solution $p_i^{@j}$'s which is the convex combination of $p_i^j$'s and 
$p_i^{*j}$'s. Since, any point in a convex set can be written as the convex combination
of its extreme points~\cite{}, it is enough to show the existence of feasible $p_i^j$'s 
for the extreme points in order to prove existence of feasible $p_i^j$'s for any point
in the convex polytope under consideration.

The extreme points of the given convex polytope has ones in $k' \leq k$ positions and 
all other zeros. The $k' \leq k$ arises due to the constraint $p_1 + \ldots + p_n\leq k$.
Consider the undirected bipartite graph linking the inspections node to the target nodes, 
with a link indicating that the inspection can audit the linked target. This graph is known
from our knowledge of $R$, and each link in the graph can be labeled by one of the $p_i^j$
variables. Let $S'$ be the set of targets picked by the ones in any
extreme points. We claim that there is a perfect matching from $S'$ to the the set of 
inspection resources (which we prove in next paragraph). Given such a perfect matching, assigning
$p_i^j = 1$ for every edge in the matching yields a feasible solution, which completes
the proof.

We prove the claim about perfect matching in the last paragraph. We do so by
contradiction. Assume there is no perfect matching, then there must be a set 
$S'' \subseteq S'$, such that $|N(S'')| < |S''|$ ($N$ is neighbors function, this statement holds
by the well known Hall's theorem). As $S ''\subseteq S'$ it must hold that 
$p_i = 1$ for all $i \in index(S'')$ (function index gives the indices of the set of targets).. Also, the set of targets $S''$ is audited
only by inspection resources in $N(S'')$ and, by definition of $C$ we must have a constraint (the function index gives the set of indices of the input set of targets)
$$
\sum_{i \in index(S'')} p_i \leq |N(S'')| \ .
$$
Using, $|N(S'')| < |S''|$, we get 
$$
\sum_{i \in index(S'')} p_i < |S''| \ .
$$
But, since $|index(S'')| = |S''|$, we conclude that all $p_i$ for targets in $S''$ cannot be 
one, which is a contradiction.
\end{proof}

\begin{proof}[Proof of Lemma~\ref{samealg}] 
Assume constraint $c$ is in the output of the naive algorithm. Restrict the constraint $c$ 
to be a non-implied constraint, i.e., if $c$ is given by $LHS(C) \leq RHS(c)$ then it  cannot be written as $\sum_{i=1}^{n} LHS(c_i) \leq \sum_{i=1}^{n} RHS(c_i)$ for any $c_1, \ldots, c_n$. Note that such a restriction does not change the region defined by
the set $C$. By the description of the 
naive algorithm, this constraint must correspond to a set of inspection resources, say $S$, and
the set of targets $T_S$ inspected only by inspection resources in $S$, and there exists
no subset of $T_S$ and $S$ such that the subset of targets is inspected only by the subset of inspection resources. The constraint $c$ is of the form $P(T_S) \leq |S|$, where $P(T_S)$ is the 
sum of the probability variables for targets in $T_S$. Now, for the intersection graph
representation, every node $x$ represents targets that are inspected by a given subset $x_s$ of inspection resources. For our case, we claim that every $t \in T_S$ is either equivalent to or linked to another target in $T_S$, otherwise 
we have the subset $\{t\} \subset S$ inspected only by $t_s \subset T_S$. Thus, the 
targets in $T_S$ form a connected induced sub-graph. Thus, we conclude that the 
$\mathsf{CONSTRAINT\_FIND}$ algorithm will consider this set of targets and find the
non-implied constraint $c$.

Next, assume that $\mathsf{CONSTRAINT\_FIND}$ finds a constraint $c$. As this
corresponds to a connected induced sub-graph (with targets as nodes), we obtain a set of targets 
$T$ and a set of inspection resources $\cup_{t \in T} F(t)$ such that targets in $T$ are audited
by $\cup_{t \in T} F(t)$ only. Thus, by definition of the construction of $c$, this constraint
is same as the constraint $c'$ obtained by the naive algorithm when it considers the set
$\cup_{t \in T} F(t)$. That is, $c$ will also be found by the naive algorithm.
\end{proof}

\begin{proof}[Proof of Lemma~\ref{graphlemma}] 
The proof below lists sufficient conditions under which the number of induced connected sub-graphs is polynomial is size. The proofs are constructive also, allowing extracting these sub-graphs in polynomial time.
\begin{itemize}
\item Graphs with $O(\log n)$ nodes. The maximum number of connected induced subgraphs in a 
graph with $t$ nodes is $2^t$ (take any subset of vertexes). Thus, clearly a graph with
$O(\log n)$ nodes will have polynomially many connected sub-graphs.
\item Graphs with constant max degree and constant number of nodes with degree $\geq 3$. The number of connected induced subgraphs in a 
tree with max degree $d$ and $t$ vertexes with degree $\geq 3$ is bounded from above 
by $2^{(2(d+1))^{t+1}} n^{(d+1)^{t+1}}$. To prove this result, denote by $T(n,d,t)$ the worst case number of connected 
induced sub-graphs in a graph with $n$ vertices, and max degree $d$ and $t$ vertices with degree $\geq 3$. 

Remove a vertex $X$ with degree $\geq3$ to get $k \leq d$ disconnected 
components. Each connected sub-graph in any component that was linked to $X$ could be combined with any connected sub-graph of any other component linked to $X$, yielding a new connected sub-graph. 
Thus, considering every subset of the $k$ components, we get
$$
T(n,d,t) \leq 2^k \left(T(n-1, d, t-1)\right)^k 
$$
Observing that $k < d+1$, and $T(n-1, d, t-1) \leq T(n, d, t-1)$ we get
$$
T(n,d,t) \leq \left( 2T(n, d, t-1) \right)^{d+1}
$$
Thus, we see that $T(n,d,t) = 2^{(2(d + 1))^{t+1}}(n)^{(d + 1)^{t+1}}$ satisfies the above equation. 

As part of the induction, the base case requires reasoning about a graph with max degree either $1$ or $2$
($t = 0, d = 1$ or $2$). Thus, we need to show that the connected sub-graphs is less than $n^{d+1}$, which is $n^2$ for max-degree $1$ and $n^3$ for max-degree $2$.
The max-degree $1$ case is trivial. For the max-degree $2$ case, such graphs can be decomposed efficiently into paths and cycles, and the number of connected sub-graphs
on cycles and paths is less that $n^2$.
\end{itemize}
\end{proof}

\begin{proof}[Proof of Lemma~\ref{targetlemma}]
First, assume $p_i^o$'s and $x_i^o$'s are an optimal point of $PX_n$.
If $x_n^o > 0$ then reducing the value of 
$x_n^o$ always yields a feasible point,
as the quadratic inequality is still satisfied. Also,
clearly reducing $x_n^o$ increases the objective value. Thus, we must have $x_n^o = 0$.

Next, reducing the value of 
$p_n^o$ by $\epsilon_p$ ($\leq \epsilon$) always yields a feasible point,
as the quadratic inequality is still satisfied and so are the linear inequalities. The values
of $\epsilon_p$ are chosen so that the new values of $p_n$ 
lie on our discrete grid for $p_n$. Thus, the new 
feasible point $F$  is $p_i^o$'s for $i = 1$ to $n-1$, $x_i^o$' for $i = 1$ to $n-1$ and $p_n = p_n^o - \epsilon_p$, $x_n = 0$. The objective at this feasible point $F$ is off from the optimal 
value by a linear combination of $\epsilon_p$ with constant coefficients, which is less than a constant times $\epsilon$. Then, the 
SOCP with the new values of $p_n$ yields an objective value at least as high 
as the feasible point $F$ on the grid. Thus, using our approach, we obtain a solution that differs from 
the optimum only by $\Theta(\epsilon)$. 
\end{proof}

\begin{proof}[Proof of Lemma~\ref{optimalprop}] 
observe that the quadratic inequality can
be rewritten as $$
p_n (x + \Delta_n) + \delta_{i,n} \leq p_i (x + \Delta_i) \ .
$$ 
Suppose  is an optimal point, and suppose for some $j$ we have the
strict inequality
$$
p_n^* (x^* + \Delta_n) + \delta_{j,n} < p_j^* (x^* + \Delta_j) \ .
$$
Let $p_j' = \min(0, \frac{p_n^* (x^* + \Delta_n) + \delta_{j,n}}{x^* + \Delta_j})$. Then, 
we claim that $p_1^*, \ldots, p^*_{j-1}, p_j', p^*_{j+1}, p^*_n, x^*$ is also an optimal point.
This is easy to see since decreasing $p_j$ is not restricted by any inequality other than
the quadratic inequality and the objective only depends on $p_n$. As a result, we 
can restrict the problem to be equalities for all those quadratic constraints for which
$p_n^* (x^* + \Delta_n) + \delta_{j,n} \geq 0$, and restrict $p_j = 0 $ in case $p_n^* (x^* + \Delta_n) + \delta_{j,n} < 0$.
\end{proof}

\renewcommand{\arraystretch}{1.7}
\begin{table}[t]
\begin{center}
\begin{tabular}{ | c | c | c | c | }
  \hline
  $\quad \mathbf{U_{D}^a} \quad$ & $\quad \mathbf{U_{D}^u} \quad$ & $\quad \mathbf{U_{A}^a} \quad$ & $\quad \mathbf{U_{A}^u} \quad$  \\
  \hline
  $0.614$  & $0.598$  & $0.202$ &  $0.287$ \\
  \hline
  $0.719$  & $0.036$  & $0.869$ &  $0.999$ \\
\hline
  $0.664$  & $0.063$  & $0.597$ & $0.946$ \\
\hline
  $0.440$  & $0.322$  & $0.023$ & $0.624$ \\
\hline
  $0.154$  & $0.098$  & $0.899$ & $0.902$ \\
\hline
  $0.507$  & $0.170$  & $0.452$ &  $0.629$ \\
\hline
  $0.662$  & $0.371$  & $1.000$ & $0.999$ \\
\hline
\end{tabular}\caption{Utility values for the counterexample}\label{counterexampledata}
\end{center}
\end{table}
\renewcommand{\arraystretch}{1.0}

\begin{figure}[t]
\begin{center}
\includegraphics[scale=0.50]{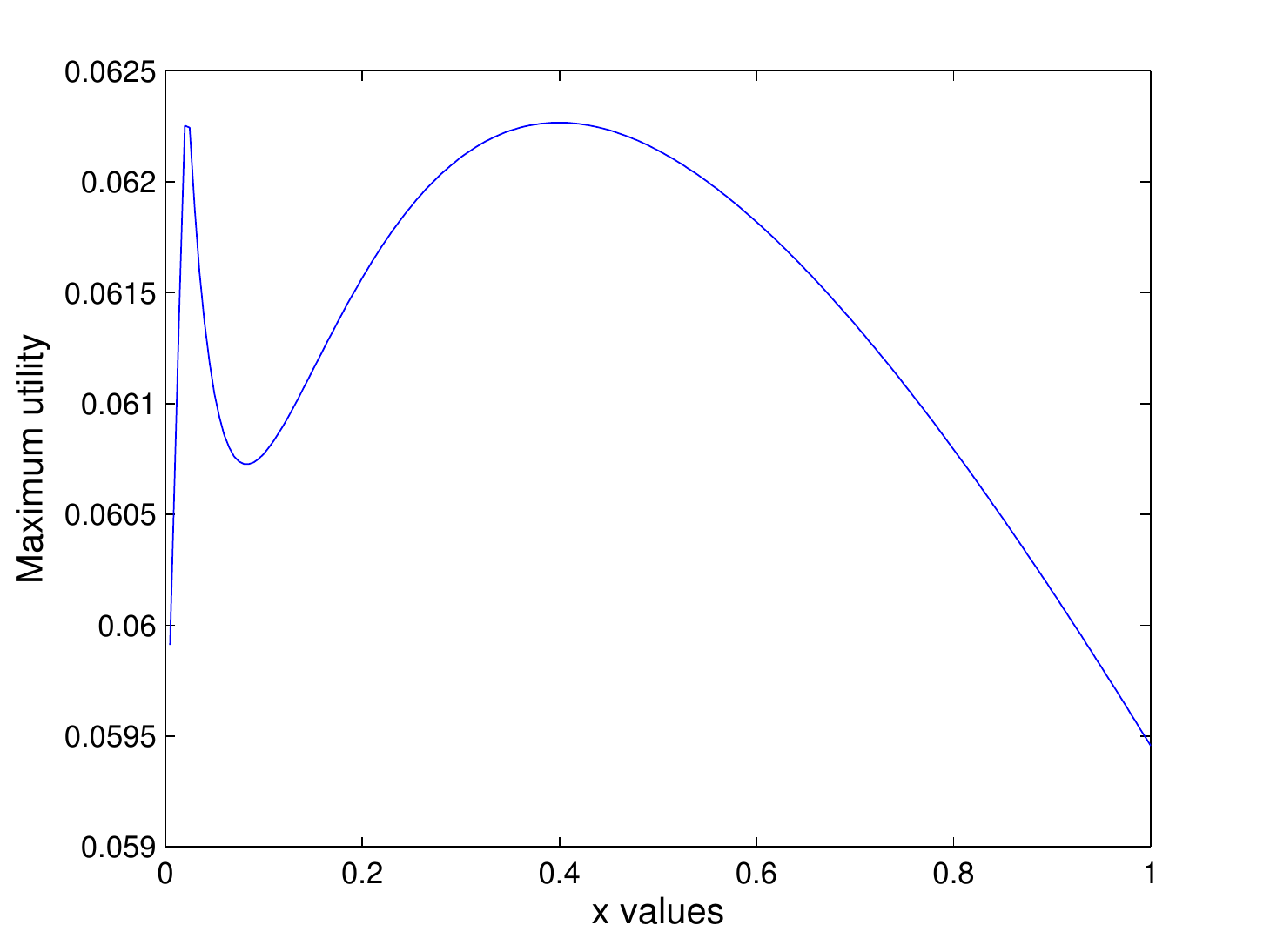}
\end{center}
\caption{Variation of maximum utility with $x$ showing multiple peaks.} \label{counterplot}
\end{figure}

\begin{proof}[Proof of Lemma~\ref{fixedxpn}] 
The fixed $x$ case is obvious, since the problem 
is a linear program for fixed $x$. The fixed $p_n$ case is equivalent to minimizing
$x$ with constraints that are of the form $f(x) \leq0$, where $f$ is a polynomial in $x$. 
For any polynomial constraint $f(x) \leq0$, it is possible to approximate the roots of the polynomial with an additive approximation factor of $2^{-l}$~\cite{schonhage1982fundamental} and hence 
find the intervals of $x$ that satisfies the constraint $f(x) \leq0$ within
additive approximation factor of $2^{-l}$. Doing so for the polynomially many constraints and finding the intersection of intervals yields the minimum value of $x$ with 
additive approximation factor of $2^{-l}$ that
satisfies all constraints. 
\end{proof}

\begin{proof}[Proof of Lemma~\ref{boundary}] 
We can derive an 
easy contradiction for this scenario. Assume $p_n^*, x$ is optimal.
As $f_b$ is continuous in $x$, if all inequalities are strict, then it is possible to 
increase the value of optimal $p_n^*$ by a small amount, without violating the 
constraints. Clearly, this increased value of $p_n$ results in a higher objective value,
contradicting the assumption that $p_n^*, x^*$ is optimal.

\end{proof}

\section{Maximum Value of Objective is Not Single-peaked} \label{counterexample}

As stated above, the solution of the optimization problem is 
not single peaked in punishment $x$. Here we show the counterexample that proves
this fact. We choose a problem instance with just one defender resource and
7 targets, and we consider only the case when the seventh target is the target under attack. The value of $a$ was chosen to be $0.01$, and $x$ was discretized with 
interval size of $0.005$. The various values of utilities are shown in Table~\ref{counterexampledata}. The variation of maximum utility with $x$ is shown
in Figure~\ref{counterplot}. It can be seen that the maximum value is not single 
peaked in $x$.

\section{Model enhancement.} We state an extension to the model that captures immediate losses that the defender suffers by imposing a punishment, for example, firing or suspending an employee requires time and effort to find a replacement. Mathematically, we can account for such loss by including an additional term in the objective of our optimization:
$$
p_{n}(\Delta_{D,n} - a_1 x) - a x
$$
where $-a_1 x$ captures the loss from imposing punishment. It is not hard to check that we still get a FPT with the above objective. The reason for that is as because for any discrete change $\epsilon$ in $x$,
the term $- p_n a_1 x$ changes by at max $\theta(\epsilon)$ amount given constant bit precision.

We also still get the same FPTAS result by a minor modification to Lemma~\ref{optimalprop}. In that lemma, we need to consider the case of $\Delta_{D,n} - a_1 x < 0$ separately, and it is not too hard to see that in this case the optimal $p_n$ is $0$. We can solve this case of
$p_n = 0$ using Lemma~\ref{fixedxpn}.

For target-specific punishment the objective would change to
$$
 p_{n} (\Delta_{D,n} - a_n' x_n) - \sum_{j \in \{1, \ldots, n\}} a_j x_j
$$
Again, using the exact same argument as in Lemma~\ref{targetlemma}, it is not hard to see that at the optimal point $x_n$ should be $0$. Once $x_n$ is $0$, the objective
is linear in the variables and we can apply the same SOCP reduction as earlier.

\end{document}